\newtheorem{theorem}{Theorem}[section]
\newtheorem{lemma}[theorem]{Lemma}
\theoremstyle{remark}
\theoremstyle{definition}
\newcommand{\budget}{\mathcal{G}\xspace}
\newcommand{\oopt}{\ensuremath{\textit{Opt}}\xspace}
\newcommand{\ucas}{\ensuremath{\texttt{UCAS}}\xspace}
\newcommand{\calI}{\mathcal{I}}
\newcommand{\idleP}[1]{\mathcal{P}_{\text{idle}}^{#1}}
\newcommand{\workP}[1]{\mathcal{P}_{\text{work}}^{#1}}
\newcommand{\PP}[1]{\mathcal{P}^{#1}}
\newcommand{\etal}{et al.\xspace}
\newcommand{\algvar}[1]{\textsf{#1}}
\newcommand{\cluster}[1]{\texttt{#1}}
\newcommand{\EST}{\texttt{ASAP}\xspace}
\newcommand{\timeline}[3]{
\draw[thin, color=black,->] (#1,#3) -- (#2,#3) node[below=-0.5pt, ] {\scriptsize{time}};
}
\newcommand{\interval}[3]{
\draw[thin, color=red] ($(#1,#3)+(0,-0.6)$) -- ($(#1,#3)+(0,1.2)$);
\draw[thin, color=red] ($(#2,#3)+(0,-0.6)$) -- ($(#2,#3)+(0,1.2)$);
}
\newcommand{\old}[1]{\iftoggle{showOld}{\textcolor{red}{\sout{#1}}}{}}
\newcommand{\new}[1]{\iftoggle{showChanges}{\textcolor{blue}{#1}}{\textcolor{black}{#1}}}
\rule{\textwidth}{1pt}\par\vspace{0.5em}%
\title{Carbon-Aware Workflow Scheduling \\ with Fixed Mapping and Deadline Constraint}
\date{}
\author{
  Dominik Schweisgut\thanks{Humboldt-Universität zu Berlin, Germany, \texttt{dominik.schweisgut@kit.edu}, now at Karlsruhe Institue of Technology (KIT)}
  \and
  Anne Benoit\thanks{ENS Lyon and IUF, France \& IDEaS, Atlanta, USA \texttt{Anne.Benoit@ens-lyon.fr}}
  \and
  Yves Robert\thanks{ENS Lyon, France, \texttt{Yves.Robert@ens-lyon.fr}}
  \and
  Henning Meyerhenke\thanks{Karlsruhe Institute of Technology (KIT), Germany, \texttt{meyerhenke@kit.edu}}
}
\begin{document}

%
%
%

   

\maketitle

\vspace{-3em}
\begin{abstract}


Large data and computing centers consume a significant share of the world's energy consumption.
A prominent subset of the workloads in such centers are workflows with interdependent tasks,
usually represented as directed acyclic graphs (DAGs).
To reduce the carbon emissions resulting from executing such workflows in centers with a mixed 
(renewable and non-renewable) energy supply, it is advisable to move task executions to
time intervals with sufficient green energy when possible. 
To this end, we formalize the above problem as a scheduling problem with a given mapping and
ordering of the tasks. We show that this problem can be solved in polynomial time in the uniprocessor
case. For at least two processors, however, the problem becomes NP-hard.
Hence, we propose a heuristic framework called CaWoSched that combines several greedy
approaches with local search. To assess the 16 heuristics resulting from different combinations,
we also devise a simple baseline algorithm and an exact ILP-based solution.
Our experimental results show that our heuristics provide significant savings in carbon
emissions compared to the baseline.

\end{abstract}


\section{Introduction}
%
The number of large and geo-distributed data and computing centers grows rapidly and so is the amount of data processed by them. 
Their services have become indispensable in industry and academia alike. 
Yet, these services result in a globally significant energy consumption as well as carbon emissions due to computations and data transfer, see for example~\cite{AhmedBA21review} for concrete numbers.
Since the combined carbon footprint of all data/computing centers on the globe is even higher than that of air traffic~\cite{lavi23measuring},
reducing this footprint is of immense importance, both from an ecological, political, and
economical perspective.
On the technical side, data/computing centers have started to use a mix of different power sources, 
giving priority to lower carbon-emitting technologies (solar, wind, nuclear) over higher ones (coal, natural gas).
This raises new challenges and opportunities for HPC (High Performance Computing) scientists; \
for example, designing efficient scheduling algorithms was already a complicated task when computer platforms had only a single power source -- and thus the same level of carbon emissions at each point in time.
This task becomes even more difficult when a mix of power sources leads to different carbon emissions over time:
in addition to optimizing only standard performance-related objectives, one important new objective is 
to optimize the total amount of carbon emissions induced by the execution of all applications in a particular data/computing center.

%
%
Many workloads in a data/computing center, not only but in particular in a scientific context, can be seen as workflows 
consisting of individual tasks with input/output relations. The algorithmic task we focus on in this paper is
to schedule such workflows (abstracted as directed acyclic task graphs) on a parallel platform within some deadline. 
To focus on carbon footprint minimization, we assume that the mapping and ordering of all tasks and communications are already given, for instance as the result of executing the  de-facto standard HEFT algorithm~\cite{topcuoglu2002performance}. 
A somewhat similar approach of assuming a fixed mapping can be found in the literature, e.\,g., for energy- and reliability-aware 
task scheduling and frequency scaling in embedded systems~\cite{Pop07scheduling}.
%

In our simplified setting with a given mapping, minimizing the total execution time, or makespan, can easily be achieved in linear time with the ASAP greedy algorithm:
simply execute each task as soon as all preceding tasks and corresponding communications have completed. 
The computing platform that we target instead has a time-varying amount of green energy available, for example due to solar and/or wind power
produced for its data/computing center.
Moreover, the carbon emissions can vary from processor to processor due to the latter's different power demands,
making the platform completely \emph{carbon-heterogeneous}.

The problem becomes combinatorial: while we cannot change the mapping nor the ordering of the tasks on each processor, 
we can shift the tasks (and the corresponding communication operations) to benefit from lower-carbon intervals, while enforcing the deadline.
Previous studies have shown that exploiting lower-carbon intervals can be very beneficial, see e.\,g.,~\cite{wiesnerLetWaitAwhile2021e}.
Yet, this previous line of research has either not focused on scheduling individual workflows
or focused on reducing energy consumption~\cite{versluis2022taskflow,bader2025predicting} rather than carbon emission.
As we outline in more detail in Section~\ref{sec.related},
carbon-aware scheduling algorithms are still in their infancy.

\vspace{-0.2cm}
\paragraph*{Contributions.}
The main contributions of this paper are both theoretical and practical. On the theory side, we lay the foundations 
for the problem complexity, with \new{(i)} a sophisticated fully-polynomial time dynamic programming algorithm 
for the single processor case, \old{a proof of strong NP-completeness for a simple 2-processor instance 
with independent tasks and carbon-homogeneous processors} \new{(ii) a proof of strong NP-completeness for a simplified instance
with 2 or more processors, independent tasks, and carbon-homogeneous processors}, and \new{(iii)} the formulation of the general problem 
\old{into} \new{as} an integer linear program (ILP). On the practical side, we 
design efficient algorithms that greatly decrease the total carbon cost compared to a standard 
carbon-unaware competitor; for small instances, \old{we show that our algorithms achieve a close performance to the ILP.}
\new{our experimental results indicate that our algorithms achieve a quality that is close to the optimal one derived from the ILP.}

\vspace{-0.2cm}
\paragraph*{Outline.} The rest of the paper is organized as follows. Section~\ref{sec.related} surveys related work. In Section~\ref{sec.framework},
we detail the framework.  Section~\ref{sec.complexity} is devoted to complexity results.
We introduce new carbon-aware algorithms in Section~\ref{sec.algos} and assess their performance through an extensive set of simulations in Section~\ref{sec.expes}.
Finally, we give concluding remarks and hints for future work in Section~\ref{sec.conclusion}.
\old{\textbf{Note that material omitted due to space constraints can be found in the appendix of the full version of this paper~\cite{zenodo}.}}

\section{Related Work}
\label{sec.related}


Carbon-aware computing has received increasing attention in the past few years, acknowledging the clearly non-negligible share data/computing centers have on mankind's carbon footprint --
as well as the need for action to reduce the emissions given the rapid increase in data to be processed~\cite{cao2022toward}.
Most works in this more general line of research retain a high-level workload perspective and thus do not consider the concrete task of workflow scheduling.
For example, a carbon-aware load balancing algorithm to reduce the carbon footprint of geo-distributed 
data centers considers abstract workloads, not interdependent tasks~\cite{MahmudI16distributed}. 
It uses the \emph{alternating direction method of multipliers} to move workloads to locations with lower carbon intensity.
On a similar granularity, global cloud providers use scheduler-agnostic workload shifting to 
less carbon-intensive data centers, depending on the projected availability of green energy in suitable locations and time intervals~\cite{radovanovic2022carbon}.
A high-level workload perspective and a similar objective is used by Hall et al.~\cite{hall2024carbon}.
They devise a two-phase approach of (i) day-ahead planning based on historical data and (ii) real-time job placement
and scheduling. As they consider abstract workloads and not workflows with interdependencies, their approach
is not directly comparable, either.
Finally, while Breukelmann \etal~\cite{breukelman2024carbon} model interconnected 
data centers as a weighted graph, they still consider unrelated batch compute jobs as the workload.
They formalize the optimal allocation problem in this setting as a single-leader multiple-follower
Stackelberg game and suggest an ad-hoc algorithm (which is not applicable in our setting) to solve it.


Regarding workflow scheduling in general, we refer the interested reader to 
surveys~\cite{adhikari2019, liu2018survey} and a monograph~\cite{sinnen2007task} for a broader overview.
One possible way to categorize workflow scheduling algorithms is to distinguish
online algorithms (which do not know the complete workflow when taking decisions for
tasks) and plan-based algorithms. 
Since this paper proposes a plan-based algorithm, we focus on the latter.
Even rather simplistic versions of plan-based scheduling are NP-hard~\cite{GareyJohnson},
which motivates the use of heuristics for real-world applications. 
Two common approaches are list- and partitioning-based heuristics.
HEFT (heterogeneous earliest finish time)~\cite{topcuoglu2002performance} is a very influential and still popular 
list-scheduling algorithm that has seen numerous extensions and variations over the 
years~\cite{arabnejad2013list,SHI2006665,SANDOKJI2019482,pheft,samadi2018eheft}.
It has two main phases that (i) assign priorities to tasks and (ii) then assign tasks
to processors based on the priorities from the previous phase.

Partitioning-based scheduling heuristics, in turn, group tasks into blocks and assign
these blocks to processors, see e.\,g.~\cite{Ozkaya19-IPDPS,kulagina2024mapping,viil2018framework}.
This aggregation step helps in reducing the complexity of dealing with individual task assignments
in large-scale workflows.

Two prominent algorithms for energy-efficient workflow scheduling are 
GreenHEFT~\cite{durilloMultiobjectiveEnergyefficientWorkflow2014b} and 
MOHEFT~\cite{durilloParetoTradeoffScheduling2015}. Both heuristics
optimize \emph{where} tasks are scheduled in order to save energy.
Similar to one type of our heuristics, TaskFlow~\cite{versluis2022taskflow} 
exploits \emph{slack} in workflows, i.\,e., it takes advantage of tolerable delays
by executing the corresponding tasks on more energy-efficient hardware. 
Yet, they all do not optimize for carbon emissions and thus do not consider \emph{when}
tasks should run in order to exploit green energy availability.

The importance of reducing carbon emissions has led to a number of papers working on this goal.
Wen et al.~\cite{wenRunningIndustrialWorkflow2021b}, for example, propose a genetic algorithm for
adaptive workflow mapping whose main rationale is to move tasks between geographically distributed
data centers -- depending on their energy mix. The approach only provides a mapping of tasks to 
data centers, but no task starting times. Moreover, the largest workflows in their experiments
have up to 1,000 tasks, an indication that the genetic algorithm is quite time-consuming.
A similar rationale of moving tasks to locations with sufficient green energy is used by 
Hanafy et al.~\cite{hanafyCarbonScalerLeveragingCloud2023c}, who
scale the resources assigned to elastic cloud applications in a carbon-aware manner.
Considering a single data center location, in turn, Wiesner et al.~\cite{wiesnerLetWaitAwhile2021e}
investigate how beneficial shifting of execution times to intervals with lower carbon emissions can be.
By evaluating the impact of time constraints, scheduling strategies, and forecast accuracy, 
they find significant potential (under certain conditions) and provide guidance regarding corresponding data center policies.

Altogether, this work is in line with the general trend of minimizing energy consumption and/or  carbon
emissions.  However, to the best of our knowledge,  it is the first to focus on optimizing the scheduling of a given workflow mapping and ordering to benefit from time-varying green energy.

\section{Framework}
\label{sec.framework}

We use a suitable time unit (e.g. seconds, minutes, \dots) and express all parameters as integer multiples  of this unit.

\paragraph{Platform and application. }
The target platform $\mathcal{C}$ is a cluster of $P$ heterogeneous processors $\{p_1,\dots,p_{P}\}$.
The target application consists of a workflow modeled as a Directed Acyclic Graph (DAG) $G=(V, E,\omega,c)$, 
where the vertex set~$V$ represents the set of $n$ tasks $v_1, \ldots, v_n$.  An edge $(v_i, v_j) \in E$ 
represents a precedence constraint 
between tasks $v_i$ and~$v_j$, meaning that task~$v_j$ 
cannot start before task~$v_i$ is completed and its output was communicated to the processor handling task~$v_i$. 

We assume that the mapping is given, as well as the ordering of the tasks and the communication operations 
(i.e., data transfer) on each processor. Therefore, if two tasks are mapped on the same processor with task~$v_i$
planned before task~$v_j$, we add a precedence constraint $(v_i, v_j)$ to~$E$, to ensure that the order 
is respected. 

Given the mapping, the set of communications is represented by $E' \subseteq E$, which contains all edges $(v_i,v_j)\in E$
such that the two tasks are mapped on different processors, in which case data must be communicated
between both processors before $v_j$ can start its execution. However, when the two tasks are on the same processor
 ($(v_i, v_j) \in E \setminus E'$),  task~$v_j$ can start as soon as task~$v_i$ is finished. 

Each task~$v_i \in V$ has a running time $\omega(v_i)$, and each edge $(v_i,v_j)\in E'$ has 
a communication time $c(v_i, v_j)$, which accounts for the amount of data that has to be communicated 
from the processor of task~$v_i$ to the processor of task~$v_j$. 
Computation and communication times can be arbitrary and are given, which allows us to account 
for any heterogeneity in computing speeds and/or link bandwidths across the processors.

\paragraph{Communication-enhanced DAG $G_c$. }
For simplicity, we assume that the cluster employs 
a fully connected, full-duplex communication topology, where each processor can directly communicate with every other processor simultaneously in both directions. We introduce $P(P-1)$ fictional processors
$\{p_{P+1},\dots,p_{P^{2}}\}$, one per
communication link, whose role is to execute all (potential) communications on that link. 
This will clarify how to compute the cost of a schedule. With these additional processors, each communication 
$(v_{i},v_{j}) \in E'$  
becomes a (fictional) task $v_{i,j}$ of length $\omega(v_{i,j}) = c(v_{i},v_{j})$. 
Furthermore, we add dependencies $(v_{i}, v_{i,j})$ and $(v_{i,j}, v_{j})$, each with zero communication cost. 
Since the order of communications is also assumed to be given with the mapping, we add
precedence constraints to express this order if two tasks $v_{i,j}$ and $v_{k,\ell}$ are on the same
communication link (represented by a fictional processor). This is similar to the precedence constraints
added to express the order of computing tasks and we refer to this set of constraints as $E''$. 

We  obtain a communication-enhanced DAG $G_c= (V_c, E_c, \omega)$, where $V_c$  contains
both $V$ and all $|E'|$ communication tasks~$v_{i,j}$:
$$V_c = V \cup \{ v_{i,j} \; | \; (v_i, v_j)\in E'\}, $$    
and $E_c$ contains both the precedence relations
expressing the order on each processor ($E\setminus E'$) and the new dependencies related to communication tasks: 
\[
E_c = (E \setminus E') \cup \{(v_i, v_{i,j}), (v_{i,j}, v_j) \; | \; (v_i, v_j)\in E'\} \cup E''.
\]
This DAG does not have any communication costs, since they have all been replaced by tasks. 
The number of tasks is $N = |V_c| = n + |E'|$, the mapping of tasks on processors is given as well 
as the order of tasks on processors (both original tasks from~$V$ and communication tasks from~$E'$). 
The construction of the extended platform with $P^{2}$ processors and the corresponding DAG~$G_c$ 
is straightforward. 

\paragraph{Power profile. }
In every time unit, processor $p_{i}$, $1 \leq i \leq P^{2}$, consumes idle power of $\idleP{i}$ units, 
to which a working power of $\workP{i}$ units is added whenever $P_{i}$ is active,
for a total power $\PP{i}(t)$ at time~$t$. 
A processor executing a task or a communication is active from that operation's start to its end. 
Note that communication processors are likely to  
consume much less than regular (computing) processors. In particular, we could set the static power of a link that is never used to $0$.

The horizon is an interval $[0,T[$, where $T$ is the deadline. We assume that the horizon is divided into $J$ intervals
$\{I_{1}, \dots, I_{J}\}$, where interval $I_{j}$ has length $\ell_{j}$ and $\sum_{j=1}^{J} \ell_{j} =T$.
We let $I_j = [b_j, e_j[$ so that $\ell_j = e_j -b_j$ for every $1 \leq j \leq J$. 
The set of starting and ending times of the $J$ intervals is 
$$\mathcal{E} = \{b_1=0, e_1=b_2, e_2=b_3, \ldots, e_{J-1}=b_J, e_J=T\}.$$
Within each interval~$I_{j}$, there is a (constant)
green power budget $\budget_j$ for each time unit~$t \in I_{j}$. 
If the power consumed by all processors at time~$t$ exceeds this budget, 
the platform must resort to brown carbonated power, which will incur some carbon cost at time~$t$. 
This is the key hypothesis of this work: the carbon cost of a schedule will depend in the end 
on which intervals are heavily used, or not,
by the processors. The scheduler must maximize the benefit from greener intervals while enforcing 
all dependencies and meeting the global deadline~$T$.

\paragraph{Carbon cost. }
Given a {\em schedule}, i.\,e., a start time for each task of~$V_c$ (i.\,e., including communication tasks), 
it is easy to compute 
its total carbon cost by looping over the $T$ time units: for each time unit $t$,  
sum up the power consumed by each processor, 
either computing or communicating,  $\mathcal{P}_{t} =  \sum_{i=1}^{P^2} \PP{i}(t)$
(which may include $\workP{i}$ or not).  
The carbon cost for $t \in I_{j}$
is assumed to be proportional to the non-green power, and hence we simply write 
$\mathcal{CC}_{t} = \max(\mathcal{P}_{t} - \budget_{j}, 0)$. 
The total carbon cost of the schedule is then $\mathcal{CC} = \sum_{t=0}^{T-1} \mathcal{CC}_{t}$.  
However, this approach has exponential (in fact,
pseudo-polynomial) complexity, since the problem instance has size 
$O \big( P^2+N+J+\log(T)+\max_{1\leq i\leq P^2} \log(\idleP{i}) +\max_{1\leq i\leq P^2} \log(\workP{i}) + 
\max_{1\leq j \leq J} \log(\mathcal{G}_j) \big)$.  
To compute the cost of a schedule in polynomial time, we need to proceed interval by interval 
and create sub-intervals each time a task starts or ends, so that the number of active tasks is constant 
within each subinterval. The carbon cost per subinterval then depends on the power cost of the subinterval
($\mathcal{CC}_{t}$ is constant within a subinterval) and the interval length. 
Details can be found \new{in Appendix~\ref{app.cost}.} 

\paragraph{Optimization problem. }
The objective is to find an optimal schedule, defined as a schedule whose total carbon cost $\mathcal{CC}$ is minimum. 
To achieve this goal, the scheduler can shift around tasks (including communication tasks) 
to benefit from greener intervals, while enforcing all dependencies and meeting the deadline~$T$.

\section{Complexity Results}
\label{sec.complexity}

In this section, we present an involved dynamic programming (DP) algorithm, 
establishing that the problem with a single processor has polynomial time complexity. 
On the contrary, and as expected, the problem with several processors is strongly NP-complete, 
even with homogeneous processors and no communications, but we can formulate
the general problem as an integer linear program (ILP). 

\subsection{Polynomial DP algorithm for one processor}
\label{sec.comp.seq}

\begin{theorem}
\label{th.oneproc}
The problem instance with a single processor has polynomial time complexity.
\end{theorem}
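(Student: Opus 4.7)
The plan is to reduce the single-processor instance to a one-dimensional placement of a chain of tasks and then solve it with a dynamic program whose states are piecewise-linear cost functions.

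First, observe that when $P=1$ we have $P(P-1)=0$, so there are no communication processors, $G_c=G$, and the given ordering pins the tasks into a single chain $v_1,\dots,v_n$ that must be executed in this order. A feasible schedule is therefore characterised by start times $s_1,\dots,s_n$ with $s_i+\omega(v_i)\le s_{i+1}$ and $s_n+\omega(v_n)\le T$.

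Next, I would split the carbon cost into a schedule-independent baseline plus a schedule-dependent part. For a time unit $t\in I_j$, let $c_w(t)=\max(\idleP{1}+\workP{1}-\mathcal{G}_j,0)$ and $c_i(t)=\max(\idleP{1}-\mathcal{G}_j,0)$, and set $\Delta(t)=c_w(t)-c_i(t)\ge 0$. Then the total carbon cost equals $\sum_{t=0}^{T-1} c_i(t)+\sum_{i=1}^n\sum_{t=s_i}^{s_i+\omega(v_i)-1}\Delta(t)$; the first sum does not depend on the schedule, and because $\Delta$ is piecewise constant with at most $2J$ breakpoints, the cost $F_i(s):=\sum_{t=s}^{s+\omega(v_i)-1}\Delta(t)$ of running $v_i$ starting at time $s$ is a piecewise-linear function of $s$ with $O(J)$ breakpoints. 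It therefore suffices to minimise $\sum_i F_i(s_i)$ over feasible starts.

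I would then dynamic-program along the chain, maintaining the piecewise-linear function $f_i(s)$ defined as the minimum of $\sum_{k\le i}F_k(s_k)$ subject to the chain constraints and $s_i=s$. The recurrence is $f_i(s)=F_i(s)+g_{i-1}(s-\omega(v_{i-1}))$, where $g_{i-1}(x)=\min_{s'\le x}f_{i-1}(s')$ is the prefix minimum. Both $f_i$ and $g_i$ are stored as sorted break-point lists: $g_i$ is produced from $f_i$ by a single left-to-right sweep that drops every piece dominated by the current running minimum (so $g_i$ is non-increasing piecewise linear with at most $O(|f_i|)$ pieces), and $f_i$ is obtained by merging the shifted $g_{i-1}$ with $F_i$, which introduces at most $O(J)$ additional breakpoints. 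The final answer is $g_n(T-\omega(v_n))$ plus the schedule-independent baseline.

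The main obstacle is the bookkeeping that keeps the symbolic representation small: one has to verify carefully that the prefix-minimum operation does not duplicate pieces and that the pointwise addition with $F_i$ only adds the $O(J)$ breakpoints of $F_i$. With these invariants, $|f_i|=O(iJ)$, each update costs $O(iJ)$, and the overall running time is $O(n^2J)$, which is polynomial in the input size---in particular in $\log T$ and in the encoding sizes of the powers and budgets, since $T$ only enters through the interval endpoints and the final evaluation---thereby establishing the claim.
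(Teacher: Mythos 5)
Your proof is correct, but it takes a genuinely different route from the paper's. The paper keeps the discrete dynamic program $\oopt(i,t)$ over task end times and makes it polynomial via a structural exchange argument (Lemma~\ref{th.lemmablock}): any optimal schedule can be transformed, by shifting maximal idle-free blocks toward the greener of the two bordering intervals, into one in which every block starts or ends at a time in $\mathcal{E}$, which restricts the candidate end times to a set $\mathcal{E}'$ of size $O(n^{3}J)$. You avoid any structural lemma: you peel off the schedule-independent idle cost, observe that the remaining cost is a sum of per-task terms $F_i(s_i)$ that are piecewise linear in the start time with $O(J)$ integer breakpoints (of the form $b_j$ or $b_j-\omega(v_i)$), and run the chain DP symbolically on piecewise-linear value functions via prefix minima. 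The two arguments are consistent: the breakpoints your $f_i$ accumulates are exactly interval boundaries offset by sums of consecutive task lengths, i.e., the paper's $\mathcal{E}'$, and the fact that a piecewise-linear function attains its minimum at a breakpoint is the analytic counterpart of the paper's block-shifting argument. What each buys: your version is self-contained, needs no exchange argument, and gives a sharper bound of $O(n^{2}J)$ arithmetic operations versus the paper's $O(n^{3}J)$ candidate end times (over which its DP then still has to iterate); the paper's lemma, in turn, yields an explicit structural characterization of optimal schedules that the authors reuse to motivate the interval-subdivision heuristic of Section~\ref{sec.algo.greedy}. Two details you should make explicit to close the argument: enforce $s_1\geq 0$ in the base case (set $f_1(s)=+\infty$ for $s<0$), and note that since all breakpoints are integers and each linear piece attains its minimum at an endpoint, the continuous prefix minimum coincides with the discrete one, so the symbolic DP genuinely solves the integer-time problem.
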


\begin{proof}
Consider the problem instance with a single processor executing tasks $v_{1}, \dots, v_{n}$ in this order
(with $n=|V|$, no communication tasks in this case). 
We start with a pseudo-polynomial algorithm: for $1 \leq i \leq n$ and $1 \leq t \leq T$,
we let $\oopt(i,t)$ be the cost of an optimal schedule for the first $i$ tasks and where task $v_{i}$ completes its execution exactly at time $t$, storing the value $\infty$ if no such schedule exists. We have the induction formula
\begin{equation}
\label{eq.progdynpseudo}
\oopt(i,t) = \min_{s \leq t-\omega(v_i)}  \big\{ \oopt(i-1,s) + \texttt{cc}(v_i, t) \big\}, 
\end{equation}
where $\texttt{cc}(v_i, t)$ 
is the cost to execute task~$v_i$ during the interval
$[t-\omega(v_i),t[$. 
Since there is a single processor, this can be computed in linear time by computing the length of its intersection with the intervals~$I_{j}$. In Eq.~\eqref{eq.progdynpseudo}, we  loop over possible termination dates for the previous task~$v_{i-1}$. 
For the initialization, we simply 
compute the value of
$\oopt(1,t)$ for all $t \geq \omega(v_{1})$, and let $\oopt(1,t) = \infty$ for $t < \omega(v_{1})$.

This dynamic programming algorithm is pseudo-polynomial because it tries all possible values $t \in [1,T]$
for the end times of the tasks. To derive a polynomial-time algorithm, we show that we can derive an optimal algorithm while restricting to a polynomial number of end dates. 

Given any single processor schedule $\mathcal{S}$, we define a \emph{block} as a set of consecutive tasks in the schedule, 
i.\,e., there is no idle time between tasks within a block. 
Note that if a task has idle time before and after it, it forms a block by itself. 
Furthermore, 
schedules where each block either starts or ends at a time in $\mathcal{E}$ are called $\mathcal{E}$-schedules
(recall that $\mathcal{E} = \{b_1=0, e_1=b_2, e_2=b_3, \ldots, e_{J-1}=b_J, e_J=T\}$ is the set of starting/ending
times of intervals). 
We can then prove the following Lemma (see proof \new{in Appendix~\ref{app.comp.blocks}}). 
\begin{lemma}
\label{th.lemmablock}
With a single processor, there always exists an optimal $\mathcal{E}$-schedule.
\end{lemma}
According to this lemma, we can therefore restrict the pseudo-polynomial dynamic programming algorithm to only using
task end times that belong to a refined set of end times~$\mathcal{E'}$, which is of size $O(n^{3} J)$ 
(\new{see Appendix~\ref{app.comp.blocks}}) 
thereby leading to a fully polynomial running time.
\end{proof} 

\subsection{NP-completeness of the multiprocessor case}
\label{sec.comp.para}

\begin{theorem}
\label{th.severalproc}
The problem instance with several processors is strongly NP-complete, even with uniform processors and independent tasks (hence no communications).
\end{theorem}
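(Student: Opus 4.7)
The plan is to prove strong NP-completeness in two parts: membership in NP, and strong NP-hardness via a polynomial-time reduction from the strongly NP-complete problem 3-PARTITION. For NP membership, a candidate schedule is specified by integer start times (each representable in $O(\log T)$ bits), and the framework's procedure from Appendix~\ref{app.cost} evaluates its carbon cost in polynomial time, so testing whether the cost is at most a given threshold is polynomial.

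For hardness, I would reduce from 3-PARTITION. Given $3m$ positive integers $a_1, \ldots, a_{3m}$ with $\sum_i a_i = mB$ and $B/4 < a_i < B/2$, I build the following scheduling instance. Take $P = 3m$ uniform processors with $\idleP{i} = 0$ and $\workP{i} = 1$, and assign to processor $i$ a single independent task of length $a_i$ (no precedences, no communications). Partition the horizon into $2m-1$ intervals alternating between $m$ \emph{cheap} intervals $I_1, I_3, \ldots, I_{2m-1}$ of length $B$ with green budget $\budget = 1$, and $m-1$ \emph{expensive} intervals $I_2, I_4, \ldots, I_{2m-2}$ of length $1$ with green budget $\budget = 0$; set the deadline $T = mB + (m-1)$. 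The decision version I target asks whether a schedule of carbon cost $0$ exists.

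For the forward direction, a 3-partition $S_1, \ldots, S_m$ with $\sum_{i \in S_k} a_i = B$ yields a valid schedule: execute the three tasks of $S_k$ sequentially inside $I_{2k-1}$; at most one processor is ever active, so the green budget of $1$ exactly covers the consumption, and nothing runs during expensive intervals, giving total cost $0$. Conversely, any schedule of cost $0$ must have no processor active inside expensive intervals (otherwise the null budget yields positive cost) and at most one processor active at any time inside cheap intervals. Hence every task fits entirely within a single cheap interval, and tasks sharing a cheap interval execute sequentially. Since the total work $mB$ equals the total length of the cheap intervals, each cheap interval is filled exactly to $B$. The condition $B/4 < a_i < B/2$ then forces each cheap interval to contain exactly three tasks, recovering a valid 3-partition.

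The main obstacle I expect is finding a cost structure that encodes 3-PARTITION cleanly despite the fixed mapping and ordering. The key idea is to combine two green-budget regimes: $\budget = 0$ between cheap windows pins every task inside a single cheap window, while $\budget = 1$ inside each cheap window prevents any two tasks from overlapping within it. Together with the classical $B/4 < a_i < B/2$ bound, this forces exactly three tasks per window summing to $B$. Strong NP-hardness follows because all numerical parameters in the reduction are bounded polynomially in $m$ and $B$.
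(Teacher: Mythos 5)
Your proposal is correct and is essentially identical to the paper's own proof: the same reduction from \texttt{3-Partition} with $3m$ unit-power processors, one task per processor, alternating budget-$1$ intervals of length $B$ and budget-$0$ separator intervals of length $1$, deadline $mB+(m-1)$, and cost bound $0$, with the same arguments in both directions. No issues to report.
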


\begin{proof}
We consider the class \texttt{UCAS} of decision problem instances with $P$ processors 
with uniform power consumption, i.e., $\idleP{i} = 0$, $\workP{i} = 1$ for  $1 \leq i \leq P$, and 
an input DAG $G=(V, E, \omega, c \equiv 0)$. Given a bound $C$, we ask whether there exists a valid schedule 
whose total carbon cost does not exceed $C$. We prove \new{in Appendix~\ref{app.comp.para}} 
 that
  \texttt{UCAS} is strongly NP-complete by reducing the well-known \texttt{$3$-Partition} problem to it.
\end{proof}

\subsection{Integer linear program}
\label{sec:ILP}

We formulate the problem as an integer linear program. Due to space constraints,
we only sketch the derivation and refer \new{to Appendix~\ref{app:ILP-details}} 
for details.
The ILP is written in terms of time units, hence it has a pseudo-polynomial number of variables. As stated in Section~\ref{sec.framework}, the objective function
is then to minimize
\begin{equation}\label{eq:objective}
  \mathcal{CC} = \sum_{t=0}^{T-1} \max\left( \sum_{i=1}^{P^2} \left(\idleP{i} +  \delta(t, i) \workP{i} \right)- \mathcal{G}_t, 0 \right) , 
\end{equation}
where $\delta(t,i)$ is a boolean variable that specifies whether processor~$p_{i}$ is active at time~$t$. 
Note that we still use the communication-enhanced graph, and the 
ILP enforces all dependence constraints and guarantees that all tasks are completed by the deadline $T$.

\section{Algorithms}
\label{sec:algorithms}
\label{sec.algos}


In this section, we present \texttt{CaWoSched}, a carbon-aware workflow scheduler for the scheduling problem
of minimizing the carbon cost, given a mapping and a deadline. 
Recall that we work on the communication-enhanced DAG $G_c= (V_c, E_c, \omega)$. 
In Section~\ref{sec.algo.preprocessing}, we first present the baseline algorithm, 
\EST (As Soon As Possible), which schedules each task at its earliest possible start time,
without taking the intervals into account. 
Section~\ref{sec.algo.greedy} presents several variants of a greedy procedure that allocates start times
to tasks, building on a score that is computed for each task. 
Finally, we explain  in Section~\ref{sec.algo.localSearch} how to further improve the schedule obtained 
by the greedy algorithm, by using local search. 

\subsection{Baseline algorithm} 
\label{sec.algo.preprocessing}
\label{par.algo.EST_LST}
The \EST baseline algorithm starts each task at their earliest possible start time ($EST$). 
To compute these times, we proceed similarly to the computation of a topological ordering. 

For all tasks~$u\in V_c$ with in-degree~$0$ (guaranteed to exist since $G_c$ is acyclic), we set
$EST(u) = 0$ and decrease by one the in-degree of successor tasks (tasks~$v$ such that $(u,v)\in E_c$). 
A task~$v$ obtains an in-degree of~$0$ once all of its predecessors have been handled, and we can 
then compute its earliest start time as:
\begin{equation*}
  EST(v) = \max_{(u,v) \in E_c} \{EST(u) + \omega(u)\},  
\end{equation*} 
which corresponds to the time when all predecessors have completed their execution, 
when they are started as soon as possible.

%
The computation of $EST$ is done with a queue to handle tasks. 
The proof for correctness and existence is similar to the proof of correctness for Kahn's algorithm 
for topological sorting~\cite{Kahn62} and is hence omitted here.


\subsection{Greedy schedule}
\label{sec.algo.greedy}
We now describe how to compute a greedy schedule for the workflow, while accounting for the 
carbon cost of each interval (\EST does not consider intervals at all). 
%
%
The idea is to assign a score to each task, and sort the tasks accordingly. 
Afterwards, we process the tasks in this order and try to find a good starting time for them.

\paragraph{Scores for the tasks.}\label{par.algo.scores}
The goal of the scores is to express how beneficial it is to schedule a task before other tasks. 

The first  score is the {\bf slack} $s(v)$ of task~$v$, which represents 
the difference between the 
latest possible starting time of a task~$v$, $LST(v)$,  
and its earliest start time $EST(v)$.

$LST$ can be computed similarly to $EST$, using a queue to handle tasks. 
We set $LST(v) = T - \omega(v)$ if $v\in V_c$ and decrease by one the out-degree of predecessor tasks
(tasks~$u$ such that $(u,v)\in E_c$). A task~$u$ obtains an out-degree of~$0$ once all of its successors 
have been handled, and we can 
then compute its latest start time as:
\begin{equation*}
  LST(u) = \min_{(u,v) \in E_c} \{LST(v) - \omega(u)\}. 
\end{equation*}

Hence, the slack $s(v) = LST(v) - EST(v)$ describes the number of time units by which a task~$v$ can be shifted, 
since its start time has to be between $EST(v)$ and~$LST(v)$. 
If the slack of a task is large,
then it usually means that we have some flexibility to schedule it. 
We therefore try first to schedule tasks with a small slack, since there will still be room to shift tasks
with a higher slack later. Note, however, that the slack does  
not account for the running time of the task. 

The second score is the {\bf pressure} of a task~$v$, defined as: 
 \begin{equation*}
  \rho(v) = \frac{\omega(v)}{s(v)+\omega(v)}.
\end{equation*} 

While slack does not take the running time into account, pressure accounts
for it since it might play an important role for the power usage of the cluster. 
For pressure values, we have $0 \leq \rho(v) \leq 1$, with a pressure of~$1$ when there
is no flexibility (i.e., $s(v)=0$). 

Hence, on the one hand, there is a high pressure to schedule a task $v$ if its running time is large compared 
to the range in which it can run. 
On the other hand, if a task has low pressure, it means that there is a lot of flexibility for starting the task. 
In this case, it is beneficial to schedule tasks with high pressure first; hence, 
we sort the tasks by non-increasing order of pressure. 

However, both scores do not account for the heterogeneity of the processors in terms of power consumption. 
Hence, we also introduce two weighted scores, where the functions 
for a task~$v$ mapped on processor~$p_i$ are multiplied by the factor: 
\begin{equation*}
wf(i)=  \frac{\idleP{i} + \workP{i}}{\max_{j}(\idleP{j}+\workP{j})}
\end{equation*} for pressure and its reciprocal for slack. 
For slack, we use the reciprocal since tasks are sorted in non-decreasing order.

\paragraph{Subdivision of the intervals.}\label{par.algo.subdivision}
Recall that $I_1,\ldots, I_J$ are the initial intervals coming from the power profile. 
As discussed in Section~\ref{sec.comp.seq}, 
there is a more fine-grained subdivision 
of these intervals, such that every task starts at the beginning of such an interval when we look at the special case 
of one processor. Motivated by this result and the question of how
to find a good starting time for a task without looking at every time unit, 
we do a similar subdivision for the multiprocessor case. 
On each processor, we create all possible blocks of at most~$k=3$ consecutive tasks (the parameter~$k$
is used to limit the number of intervals, and hence the time complexity of the heuristics). Each block
is tentatively scheduled to start or end at one of the original intervals, and we memorize the possible 
start times for each task and each block. When this is done on all processors, 
we sort the possible start times 
and compute the induced subdivision of the intervals. 

Further refinement could be used by considering larger block sizes $k>3$, but we observed in our experiments that $k=3$ already creates 
a lot of \old{sub-intervals} \new{subintervals}.

\paragraph{Algorithm variants without local search.}
With four scores (slack, pressure, weighted-slack, weighted-pressure) and two interval subdivisions
(normal or refined), we obtain eight algorithm variants:
\algvar{slack} (unweighted, normal), \algvar{slackW} (weighted, normal), \algvar{slackR} (unweighted, refined), \algvar{slackWR} 
(weighted, refined) for slack and analogously with prefix \algvar{press} for pressure.

We now detail how these algorithms select a starting time for each task,
which is always a time at the beginning of an interval. 
%

Given a score and an interval subdivision, we pick the next task, say $v$, according to the best score value.
The interval set is denoted as 
$\{I_1, \dots, I_{J'}\}$, where $J'=J$ if intervals are not refined, and \old{$J' > J$} \new{$J'\geq J$} otherwise. 
We have 
 $I_j = [b_j, e_j[$. 
 First, the algorithm computes the subset of the intervals such that 
$EST(v) \leq b_j \leq LST(v)$, i.e., intervals at the beginning of which the task can be started. 

If this set is empty (which is rarely the case in practice), we simply start the task at time~$EST(v)$. Otherwise, 
we sort the intervals according to their budget $\mathcal{G}_j$ and schedule the task to start at the beginning 
of the interval with the highest budget. If there are multiple intervals that are possible, we use the interval with the earliest starting point. 

Afterwards, 
we look at all intervals during which task $v$ runs. For the first and last intervals, if $v$ does not cover the whole interval, 
we split the interval in two sub-intervals (one where the task is running, the other where it is not). 
Then, on each interval where the task runs, we decrease the power budget
by $\idleP{i}+\workP{i}$, where $p_i$ is the processor on which task~$v$ is mapped, to account for the fact
that there is now a task running in this interval and consuming some power -- hence the green budget is lower. 

Also, once the task has been scheduled, this influences the $EST$ and $LST$ of other tasks as well. 
Hence, we update this for all tasks that have not been 
scheduled yet. In particular, these updates have to be made possibly for the whole graph, and we use a precomputed topological order for this. 
These updates take~$O(n+|E_c|)$ time. 

\subsection{Local search}
\label{sec.algo.localSearch}
Once a greedy schedule has been obtained, we propose to refine this schedule by doing a local search, 
exploiting the flexibility that tasks still provide within the greedy schedule. 
The corresponding algorithm variants receive a suffix of \algvar{-LS}, for example \algvar{pressWR-LS}.

For the local search, we introduce a parameter 
$0 \leq \mu \leq T-1$. First, we sort the processors by non-increasing power consumption $\workP{i}$,
i.e., the more costly processor is considered first. 
For each processor in this order, we then iterate over the tasks of the processors from left to right
and look $\tau$ time units to the left and \new{right, and} check whether moving the task would give us a gain and is valid. 
This means that we make sure for every possible move that the corresponding start time of a task $v$ stays in the interval $[EST(v), LST(v)]$.
\old{Thereby, we always store the best possible legal move and after checking all
specified time units for the task, we apply the best move if it has positive gain and we update the cost.}
\new{To this end, we iterate over the time units from the earliest to the latest. If we find a legal move with a positive gain, we apply it and update the cost.
(One could also check all legal moves and apply the best one. However, preliminary experiments showed that this would not significantly improve the outcome, 
so we opted for the faster variant.)}
Once this has been done for all tasks on the current processor, we process the tasks
of the next processor in the ranking. 
At each round, we record whether we had a positive gain or not. If there was one round through the tasks without gain, 
we stop the local search. 

\section{Experimental Evaluation}
\label{sec.expes}
In this section, we evaluate the proposed carbon-aware scheduling framework \texttt{CaWoSched} with its numerous algorithm variants.
We mainly focus on solution quality and running time in comparison to the baseline \EST. For small instances, we also
compare the quality against optimal solutions derived from the ILP formulation.
The code and data used in the simulations are publicly available for reproducibility purposes at \new{\url{https://github.com/KIT-EAE/CaWoSched}}.

\subsection{Simulation setup}\label{sec.expe_setup}
\paragraph{Target computing platform.}\label{par.hardware}
%
We consider two target computing platforms with a heterogeneous setup; their
properties are inspired by
real-world machines used for the experimental evaluation in~\cite{bader2024lotaru}. 
There are six processor types, and we consider $12$ (resp.~$24$) nodes of each type for the
\cluster{small} (resp. \cluster{large}) cluster, 
hence a total of  $72$ (resp. $144$) compute nodes. 

In addition to the normalized speed values (see Table~\ref{table:clusterSetup}),   
we assign each processor a value for its idle power consumption $\idleP{}$ and its active power consumption $\workP{}$. 
The values for the power consumption are inspired by 
values coming from Intel~\cite{TDPvsACP} for modern processors. Note that we did not choose 
$\workP{}$ too large since  the CPU utilization in data centers is often far from $100\%$~\cite{googleDataCenter}. 
While the correlation between power consumption and processor speeds may not be obvious, 
the general trend is that faster processors consume more power, hence a ranking of the processor types:
nodes of type $PT1$ are the slowest/least consuming nodes, up to $PT6$, which are the fastest/most consuming ones. 
According to~\cite{googleDataCenter}, the power consumption of the network is much smaller than that of computation,
hence we draw the values for $\idleP{}$ and $\workP{}$ randomly between $1$ and $2$ for communication links,
in order to introduce a small amount of heterogeneity. 
%

\begin{table}[b]
  \centering
  \caption{Processor specifications in the clusters.}
  \label{table:clusterSetup}
  \begin{tabular}{c c c c c c}
    \toprule
    Processor Name & Speed & $\idleP{}$ & $\workP{}$ & \texttt{small} & \texttt{large} \\
    \midrule
    $PT1$ & 4 & 40 & 10 & $\times 12$ & $\times 24$ \\
    $PT2$ & 6 & 60 & 30& $\times 12$ & $\times 24$ \\
    $PT3$ & 8 & 80 & 40 & $\times 12$ & $\times 24$ \\
    $PT4$ & 12 & 120 & 50 & $\times 12$ & $\times 24$ \\
    $PT5$ & 16 & 150 & 70 & $\times 12$ & $\times 24$ \\
    $PT6$ & 32 & 200 & 100 & $\times 12$ & $\times 24$ \\
    \bottomrule
  \end{tabular}
\end{table}

\paragraph{Workflows and mappings.}
\label{par.graphs}
We evaluate the presented algorithmic framework on \new{$34$} workflows. The corresponding DAGs can be divided into real-world workflows obtained from~\cite{bader2024lotaru} (atacseq, bacass, eager and methylseq)
and workflows obtained by simulating real-world instances using the WFGen generator~\cite{WfCommons}. 
We transformed the corresponding definition for the workflow management system Nextflow~\cite{nextflow} to a .dot format with a Nextflow tool. 
Since the resulting DAGs contain many pseudo-tasks that are only internally relevant for Nextflow, we deleted them,
following what was done in~\cite{kulagina2025}.
For the simulated workflows, we use one of the respective real-world instances as a model graph and scale it up in size.
As number of vertices, we use \numprint{200}, \numprint{1000}, \numprint{2000}, \numprint{4000}, \numprint{8000}, \numprint{10000}, \numprint{15000}, \numprint{18000}, \numprint{20000}, \numprint{25000} and \numprint{30000}.  
Every graph has vertex and edge weights following a normal distribution, where we make sure that the vertex weights are in general larger than the edge weights. 
Note that these are normalized values, and the actual running time of the task is determined by its vertex weight and its assigned processor. 
We normalize the network communication bandwidth to~$1$ since its influence is not considered here.

Furthermore, we generate for every graph two mappings, one for cluster \cluster{small} and one for cluster \cluster{large}.
The mappings are generated with our own basic HEFT implementation without special 
techniques for tie-breaking, because that would not change the fact that HEFT is not carbon-aware.
Since there are more fast and power-intensive processors on the  \cluster{large} cluster, 
HEFT schedules more tasks to these processors and hence there are fewer tasks per processor 
on the other processors, compared to the \cluster{small} cluster. 


\paragraph{Power profiles.}
\label{par.powerProfiles}
For each workflow, we generate four differently shaped (renewable) energy profiles for different scenarios. 
We make sure that green power is always at least the sum of the idle power of the processors and at most 
the sum of idle power and $80\%$ of the sum of the work power. The rationale is as follows: 
if we do not have enough green power or more green power than required overall, the decisions of the scheduler become irrelevant. 
Hence, we try to create scenarios where scheduling decisions have to be done in a smart way. The scenarios are the following:
\begin{enumerate}
  \item[S1:] A $-x^2$ shape, where the interval budgets follow this function with random perturbations. 
    This models a situation where there is little green power in the beginning, then the supply with green energy is rising and falls at some point again (solar power from morning to evening, for example).
  \item[S2:] An $x^2$ shape that models the same situation as in S1, but starting from midday, again with random perturbations.
  \item[S3:] A $\sin(x)$ shape, where we model 24 hours of this scenario, i.e., we have little green power in the beginning and then we follow a sinus shape as given on $[0,2\pi]$. We also add random perturbations.
  \item[S4:] A constant green power budget with perturbations (which can model situations where one has storage for renewable energy or nuclear power -- see setting of France in~\cite{wiesnerLetWaitAwhile2021e}). 
\end{enumerate} 
For each scenario, we have four different deadlines. Let $D$ be the time required by the \EST schedule, 
which is the tightest deadline. We consider deadlines $D$, $1.5D$, $2D$, and $3D$, providing
more or less flexibility to shift tasks around in the schedule. 
 
Hence, we have in total $16$ power profiles. For the workflow types atacseq \new{and} methylseq, we have 12 graphs per type, for bacass we only use the real-world version 
due to problems with scaling, and for eager we have 9 graphs with up to \numprint{18000} vertices. This results in $2 \times 34 \times 16 = 1088$ simulations 
(2 platforms, 34 workflows, 16 power profiles) per algorithm. 
All algorithms are implemented in C++ and compiled with g++ (v.13.2.0) with compiler flag -O3.
The experiments are managed by simexpal~\cite{DBLP:journals/algorithms/AngrimanGLMNPT19} and executed on workstations with 192 GB RAM and 2x 12-Core Intel Xeon 6126 @3.2 GHz
and CentOS 8 as OS.
Code, input data, and experiment scripts are available to allow reproducibility of the results \new{at~\url{https://github.com/KIT-EAE/CaWoSched}}.
\old{We plan to make them publicly available upon paper acceptance.}
The ILP is implemented using Gurobi's Python API~\cite{gurobi} and for license reasons executed on a machine with a 13th Gen Intel(R) Core(TM) i7-1355U processor with 16GB RAM running 
Ubuntu 24.04.1 LTS.
Further, for the simulation results below, we set the tuning parameter for the subdivision \new{to} $k=3$ and the tuning paramater for the local search to $\mu=10$.

\subsection{Simulation results}
\label{sec.expe_quality}
%
We compare the quality of the schedules returned by the \algvar{CaWoSched} variants.
Recall that there are two base scores, slack and pressure, that can be weighted by a factor
accounting for the heterogeneity in power consumption of the processors, and we can use either
the original or refined intervals (see 
Section~\ref{sec:algorithms}). The heuristics then apply a local search to further improve the solution. 
We first compare the solution quality when the local search is applied, but we also analyze 
the influence of the local search 
on different algorithm variants. Next, we study the impact of various parameters. 
We also compare the \new{heuristics'} solution to the optimal solution returned by the ILP on small instances. 
%

\paragraph{With local search.}
As a first measure of performance, we rank the different algorithm variants for each instance. This means that we record the frequency with which each algorithm variant ranks first, second, 
third, etc., in terms of carbon cost. Note that this means that if two variants have the same cost, 
they will end up with the same rank and the next rank is then skipped. The results can be 
seen in Figure~\ref{fig.ranking_distribution}; the main observations are the following. 
\begin{figure}
  \centering
  \includegraphics[width=\linewidth]{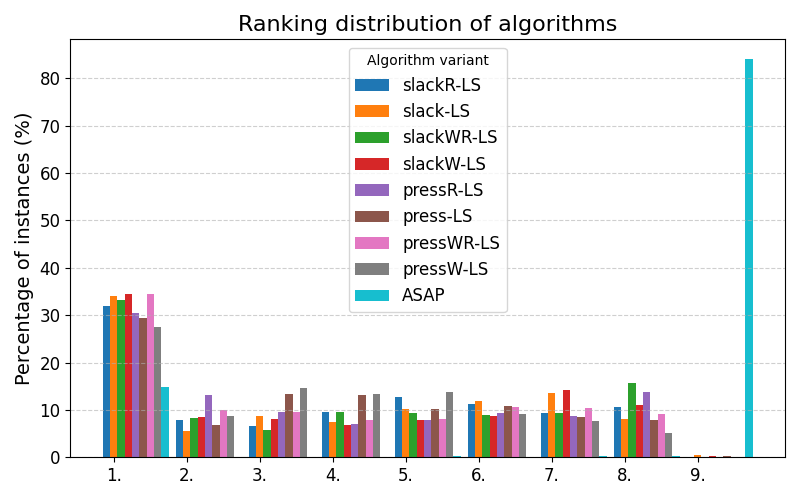}
  \caption{The distribution shows for which percentage of the instances each algorithm variant was ranked first, second, third, etc. Note that multiple algorithm variants can have the same rank.}
  \label{fig.ranking_distribution}
\end{figure}
First, we can see that all our algorithm variants are ranked first significantly more often than this is the case for the baseline \EST. Note that even 
if the baseline has rank~$1$, it does not necessarily mean that a variant of the algorithm performed worse,
since they all could have found the optimal solution. 
In particular, we can see that the baseline performed worst in $\numprint{84.01}\%$ of the cases.
Another observation is that none of the algorithm variants does significantly outperform the others in
terms of ranking. The \algvar{pressWR-LS} variant is ranked first most frequently ($\numprint{34.47}\%$), but with a small margin. 

If we look at performance profiles, we obtain more detailed insights about the
quality of each variant, see Figure~\ref{fig.performance_profile}.
\begin{figure}
  \centering
  \includegraphics[width=\linewidth]{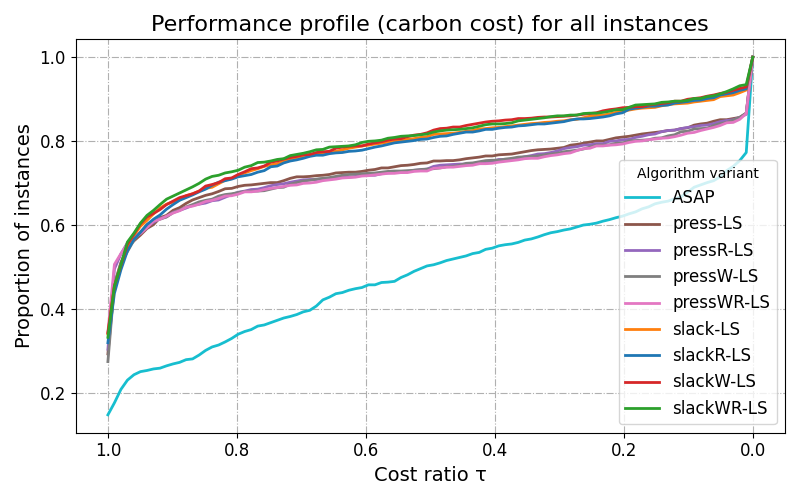}
  \caption{The ratio is the best cost found divided by the algorithm variants' own cost. Then the percentage of instances for which this is larger than or equal to $\tau$ is plotted. A higher curve is better.}
  \label{fig.performance_profile}
\end{figure}
We report the proportion of instances with a cost ratio at most~$1$, where the cost ratio is the best carbon cost divided 
by the algorithm's carbon cost. Note that if the algorithm's carbon cost is 0, then the best cost is also~0 and the
ratio is set to~$1$. Otherwise, a ratio of $1/2$ means that the heuristic's cost is twice higher than the best cost, 
and a ratio of~$0$ corresponds to a non-null carbon cost, while the best is~$0$. 
Even though we can see here again that for $\tau=1.0$, i.e., the proportion of instances for which the algorithm variant achieves the best cost, is the highest for \algvar{pressWR-LS}, we also 
observe that for lower $\tau$ values, i.e. more suboptimality tolerance, the curves for the algorithm variants using slack as base score surpass the pressure variant, which hints at a better overall performance on average.
Interestingly, this observation seems to be influenced by the tolerance in the deadline. 
\old{That is why we show how the deadline influences the performance profile next, see Figure~\ref{fig.pf_offset}.}
\new{This is why we next illustrate how the deadline impacts the performance profile, as shown in Figure~\ref{fig.pf_offset}.}
\begin{figure*}
  \centering
  \begin{subfigure}[b]{0.7\textwidth}
    \includegraphics[width=\linewidth]{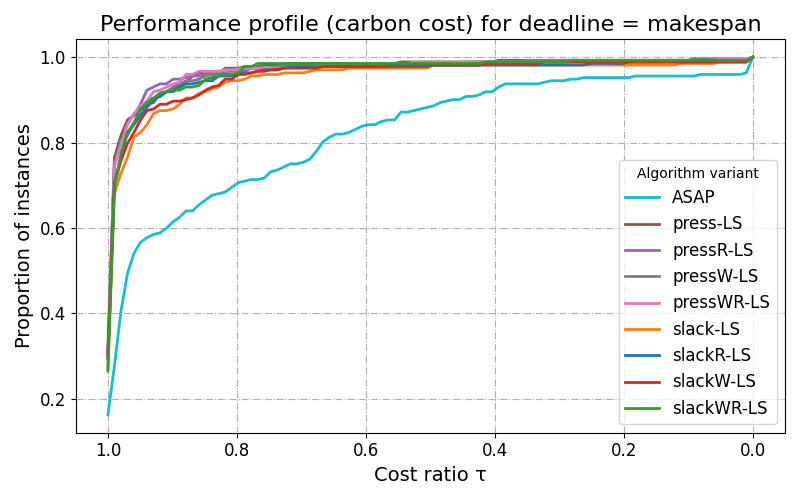}    
  \end{subfigure}
  \hfill 
  \begin{subfigure}[b]{0.7\textwidth}
    \includegraphics[width=\linewidth]{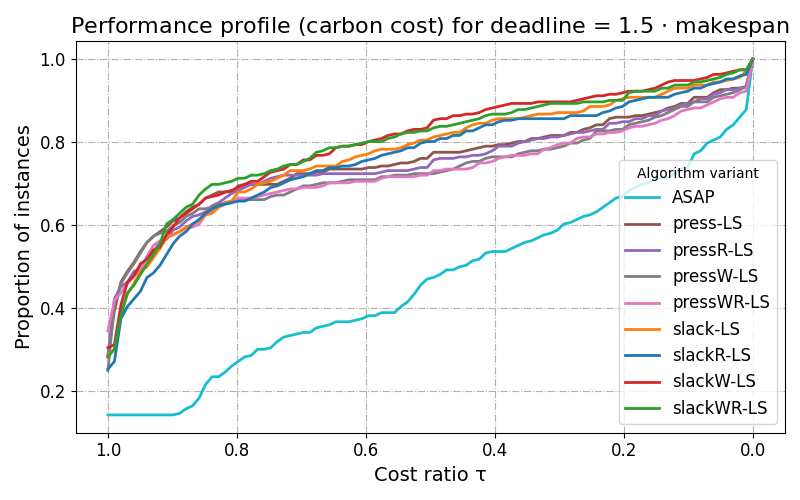}    
  \end{subfigure}
  \hfill
  \begin{subfigure}[b]{0.7\textwidth}
    \includegraphics[width=\linewidth]{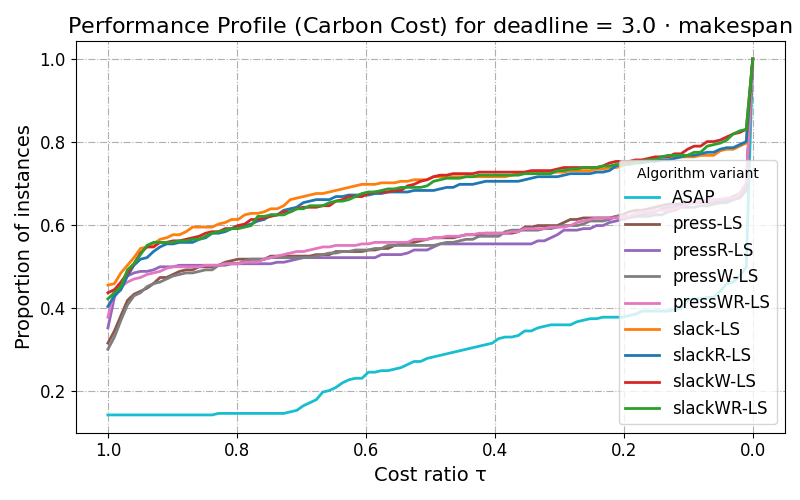}    
  \end{subfigure}
  \caption{The evolution of the performance profiles when adding more tolerance to the deadline from left to right
  (data for deadline factor $2.0$ can be found in Appendix~\ref{sec:moreResults}).}
  \label{fig.pf_offset}
\end{figure*}
While we can observe for a tight deadline that \algvar{pressR} and \algvar{pressWR} have a higher curve, 
these variants are clearly surpassed by slack variants when there is more tolerance in the deadline. 

Another important aspect for the evaluation is the cost improvement over the baseline algorithm \algvar{ASAP}. 
For this, we first look at the median of the cost ratio between the 
baseline \algvar{ASAP} and the different algorithm variants over all instances. This is shown in Figure~\ref{fig.geometric_means}.
\begin{figure}
  \centering
  \includegraphics[width=\linewidth]{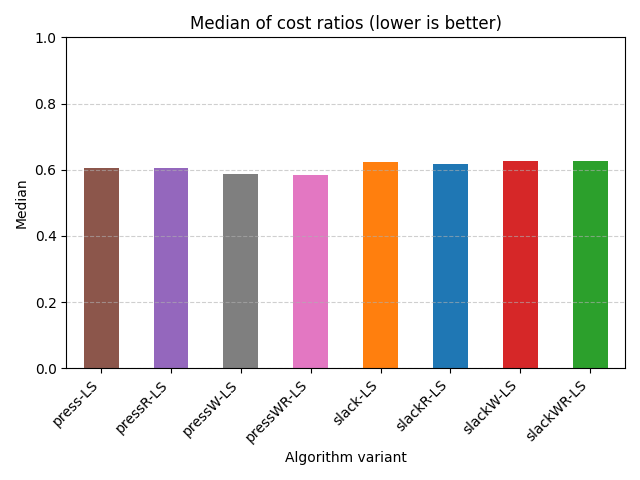}
  \caption{The median of the cost ratios obtained by dividing heuristics carbon cost by the carbon cost of the deadline.}
  \label{fig.geometric_means}
\end{figure}
(Note that a geometric mean is not applicable here because the ratio can be $0$ if our heuristic has carbon cost zero but the baseline has not. 
Further, since there are cases when the baseline performs better than our heuristics, we cannot use the arithmetic mean, either.)
We can see that all algorithms are closely together with a cost ratio median of $\approx 0.6$, meaning that the algorithm needs only $60\%$ of the carbon cost compared to the baseline (or, vice versa, they are $\approx 1.67\times$ better). 
We also see that regarding this cost ratio, the algorithm versions using pressure as base score perform better than the slack variants --
\algvar{pressWR-LS} has the best cost ratio median with $\numprint{0.58}$.
Again, we observe the impact of exploiting more flexibility in terms of deadline in Figure~\ref{fig.means_offset},
where it becomes clear that the cost ratio improves with more flexibility.  
%
\begin{figure*}
  \centering
  \begin{subfigure}[b]{0.6\textwidth}
    \includegraphics[width=\linewidth]{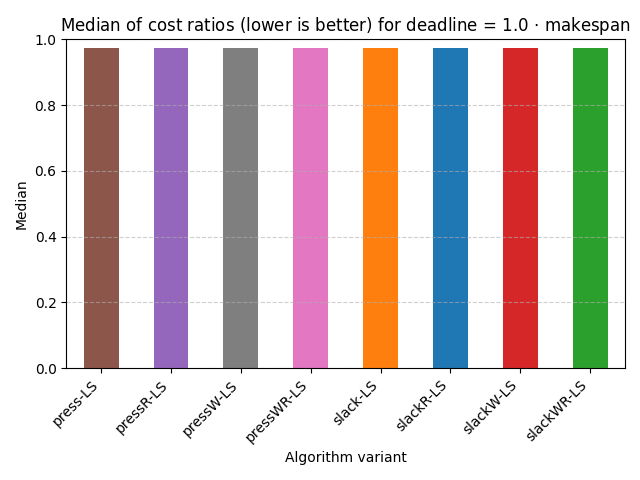}    
  \end{subfigure}
  \hfill 
  \begin{subfigure}[b]{0.6\textwidth}
    \includegraphics[width=\linewidth]{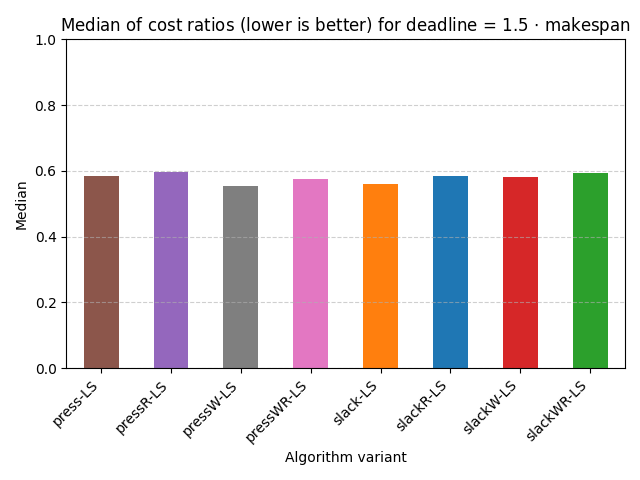}    
  \end{subfigure}
  \hfill 
  \begin{subfigure}[b]{0.6\textwidth}
    \includegraphics[width=\linewidth]{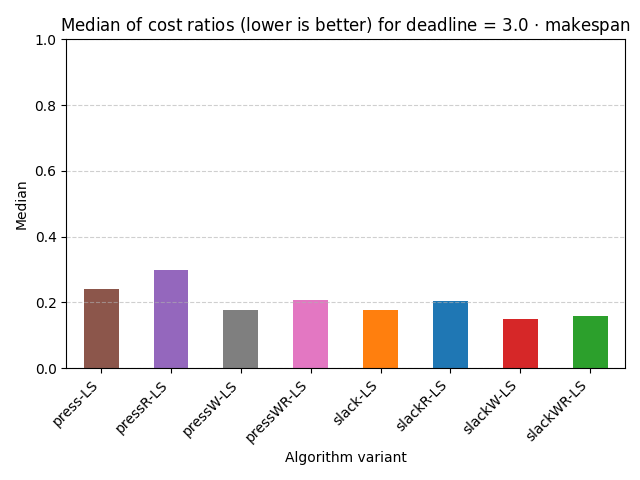}    
  \end{subfigure}
  \caption{The evolution of the median of the cost ratios when adding more tolerance to the deadline from left to right
    (data for deadline factor $2.0$ can be found in Appendix~\ref{sec:moreResults}).
  }    
  \label{fig.means_offset}
\end{figure*}
There, we can see that while the gains for a tight deadline are moderate, the algorithm \algvar{slackW} has a cost ratio of only $0.15$ compared to the baseline (or, vice versa, $\approx 6.67\times$ better).
This is a behavior that we expected from the algorithms, since we have more opportunities 
for scheduling the tasks with an increased deadline. 

To further investigate the 
improvement over the baseline, we look at boxplots for the improvement. The results for all instances are shown in Figure~\ref{fig.costRatio}.
\begin{figure}
  \centering
  \includegraphics[width=\linewidth]{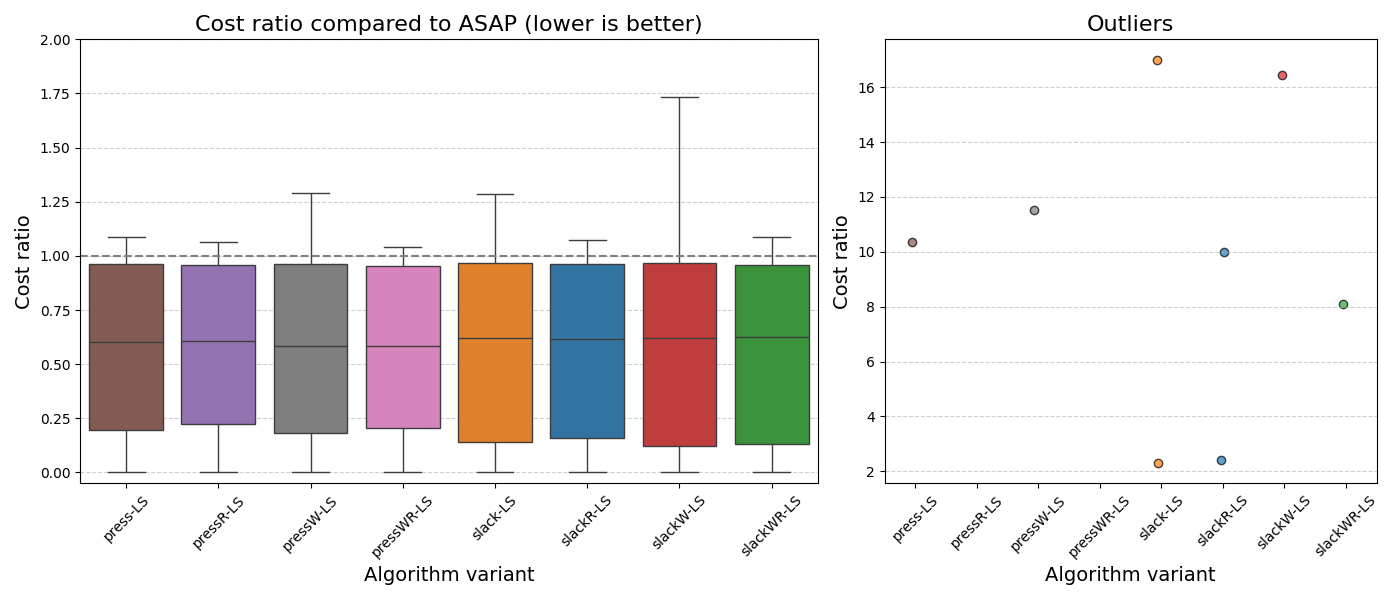}
  \caption{Boxplot of the cost ratios obtained by dividing the heuristics carbon cost by the carbon cost of the baseline. Outliers are shown in the separate plot on the right.}
  \label{fig.costRatio}
\end{figure}
What we can see here is that the solutions for most instances lie in between $\approx 0.25$ and $\approx 0.9$,
with most medians around $0.6$ (compared to the baseline).
We also see that, for some instances, the baseline performs better than the proposed algorithms. 
One reason for this is that some power profiles provide a lot of green power in the beginning 
of the horizon $[0, T[$. Hence, for these profiles, scheduling the tasks as soon as possible 
might be the best strategy. However, overall one can observe that this is rarely the case and that the 
new algorithms significantly improve over \EST in terms of carbon cost.

\paragraph{Influence of local search.}
While we have studied so far the heuristics' behavior when the local search was applied, 
we also run four of the heuristics without the local search to assess how much gain, in terms
of carbon cost, can be achieved thanks to the local search. Note that we use a subset of the full test set 
for this approach, namely all variants of the atacseq workflow type and the bacass workflow. However, note that this still 
results in more than $400$ experiments \new{per algorithm variant}. 
We report the minimum, maximum and average improvement in Table~\ref{table.improvement}. Note that here we use the 
arithmetic mean since the geometric mean is not applicable because we can have a cost ratio of $0$. 
(The results are still meaningful since we only have values between $0$ and $1$.)
\begin{table}
  \centering
  \caption{Minimum, maximum and average cost ratio for comparing the algorithm with and without local search.}
  \label{table.improvement}
  \begin{tabular}{c c c c}
    \toprule
    Algorithm Variant & Min & Max & Avg \\
    \midrule
    \algvar{slackR} & 0 & \numprint{1.0} & \numprint{0.25} \\
    \algvar{slackWR} & 0 & \numprint{1.0} & \numprint{0.25} \\
    \algvar{pressR} & 0 & \numprint{1.0} & \numprint{0.25} \\
    \algvar{pressWR} & 0 & \numprint{1.0} & \numprint{0.23} \\
    \bottomrule
  \end{tabular}
\end{table}
We can see that for every variant the cost ratio ranges from $0$ to $1.0$. A cost ratio of $0$ comes from instances where the algorithm achieves zero carbon
cost using local search, but the algorithm variant without local search has positive carbon cost. Note that a cost ratio larger than $1.0$ is not 
possible since the local search approach is designed as a hill climber. 
In general, we can see that our local search approach significantly improves the solution of the initial schedule \new{(on average up to $\approx 4.35\times$ better)}. In particular, we can observe 
in our experiments that there is a significant number of instances where the local search approach reaches an optimal solution of $0$, but the 
initial schedule has still positive carbon cost. Further, we can see that the local search improves all algorithm variants by a similar margin.

\paragraph*{Impact of parameters.} 
A complete study highlighting the impact of all parameters through detailed results
is available \new{in Appendix~\ref{sec:moreResults}}. 
First, as expected, the \EST baseline performs better when there is a lot of green power
at the beginning of the time horizon or when there is no huge change as in Scenarios S4 or S2. We provide detailed results for each power profile,
while we have presented so far aggregated results. 
Further, we can see that our algorithm achieves a significantly better cost ratio when there is not much green power in the beginning such as in Scenarios S1 and S3.


%
Also, we look separately at the cost ratios depending on the cluster size. We can see here that the cluster \new{size} has no significant influence on the performance of our heuristics. 
However, it influences the performance profile. 
While for the large cluster, the curves are closer together, we see a similar situation as in Figure~\ref{fig.performance_profile}
for the smaller cluster. 

Finally, we also study the impact of the number of tasks on the solution. The general trend is that the cost ratio gets slightly worse when the number of tasks increases. However, 
the effect is not significant, and we can conclude that the improvement of our heuristic over the baseline is in all cases significant. 


\paragraph*{Comparison with optimal solutions.}
\label{sec.expe_ilp}
We explore the quality of the novel heuristics when compared to an exact solution. 
For this, we use the ILP formulation presented in Section~\ref{sec:ILP}, and use the ILP solver Gurobi~\cite{gurobi}. 
Further, our implementation in Python makes use of the NetworkX library~\cite{networkx}. 
We restrict to instances with up to $200$ tasks, since the solver takes too long on larger instances
(already up to one hour for $200$ tasks vs only milliseconds for each heuristic). 
%
Note that the scope of this work is not to explore an efficient ILP formulation for this algorithm, 
we solely want to explore the quality of the novel algorithms. This is why we keep a simple but correct
ILP with time units instead of moving to an interval formulation. 

We show an analysis of the results in Figure~\ref{fig.ILP_comparison}. 
\begin{figure}
  \centering
  \includegraphics[width=\linewidth]{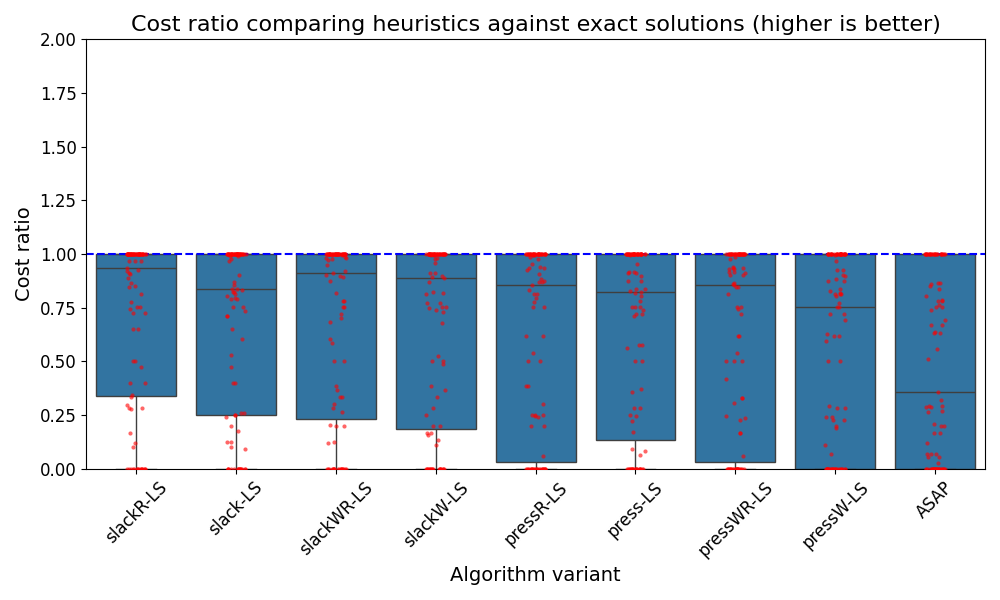}
  \caption{Cost ratio obtained by dividing the ILP result through the heuristic result. Red dots show the actual cost ratios.}
  \label{fig.ILP_comparison}
\end{figure}
What we can see here is that the median cost ratio is still reasonable when we compare our heuristics to exact solutions. Further, this seems to be an achievement of our heuristics since 
we can see that the baseline has a much worse cost ratio than our heuristics. Further, it is interesting to see that there are a significant number of instances where the cost ratio is $1.0$, 
meaning that our heuristic is able to achieve the optimal solution.

\subsection{Running Time Evaluation}
\label{sec.expe_time}
In this section, we present the time needed by the various algorithm variants for computing a carbon-aware schedule. 
Even though it is not the main goal of the scheduler to be as fast as possible, it is important that it is not  
overly time-consuming to be applicable in real-world scenarios.

\begin{figure}
  \centering
  \includegraphics[width=\linewidth]{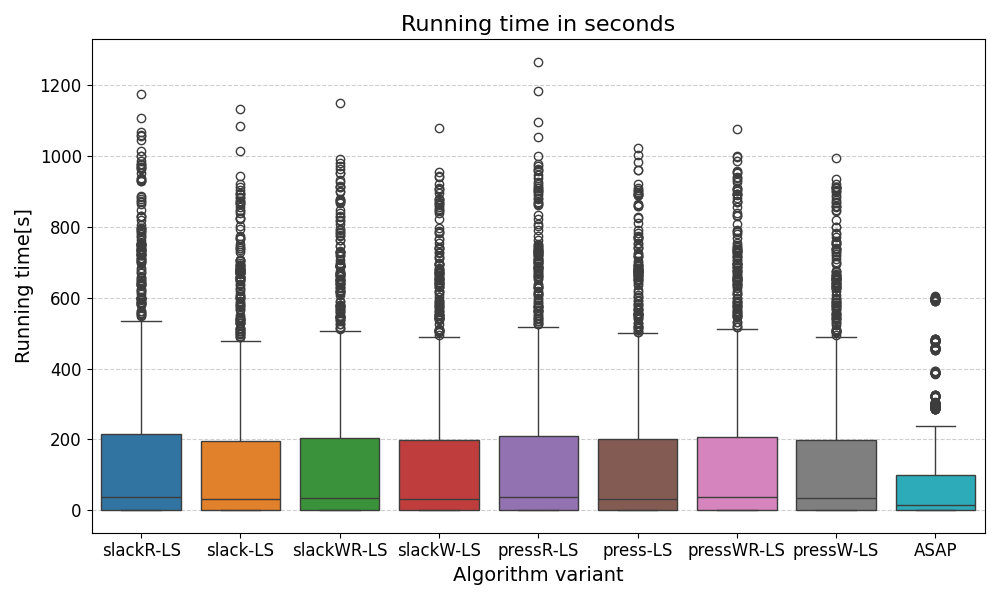}
  \caption{Time (in seconds) for each algorithm variant.}
  \label{fig.time_comparison}
\end{figure}
Aggregated running time values based on all workflows are shown in Figure~\ref{fig.time_comparison}.
We can observe that all algorithm variants  yield a reasonable slowdown compared to the baseline. 
For most of the instances, the scheduler is able to compute a schedule within seconds,
while larger workflows with up to $\numprint{30000}$ tasks can take several minutes.

\old{Figure~\ref{fig.time_comparison_largeWF} in\iftoggle{LV}{ Appendix~\ref{sec:moreResults}}{~\cite[Appendix~\ref{sec:moreResults}]{zenodo}} indicate that the largest running times result from large workflows 
with $\numprint{20000}$ to $\numprint{30000}$ tasks.}
\new{In Appendix~\ref{sec:moreResults} we can see that the largest running times result from large workflows 
with $\numprint{20000}$ to $\numprint{30000}$ tasks.}
Overall, most of the instances are still solvable in \new{less than a few} minutes. 

Another interesting aspect is that the running time seems to be mainly influenced by graph size, but not by the length 
of the time horizon~$T$. 
The running time increases only slightly with an increased deadline, which indicates that the algorithms successfully 
make decisions based on structural graph information -- without having a too broad search tree for each task 
if the deadline increases. 
The data leading to this insight can be found in Appendix~\ref{sec:moreResults} 

\section{Conclusions}
\label{sec.conclusion}

This work aimed  at minimizing carbon emissions when executing a scientific workflow on a parallel platform with a time-varying \old{mix} \new{mixed} (renewable and non-renewable) energy supply.  We focused on improving a given
mapping and ordering of the tasks (for instance generated by HEFT) by shifting task executions 
to greener time intervals whenever possible, while still enforcing all dependencies.
We showed that this algorithmic problem can be solved in polynomial time in the uniprocessor case. For two processors, the problem becomes NP-hard, even for a simple instance with independent tasks and carbon-homogeneous processors.
We proposed a heuristic framework combining several greedy approaches with local search. The experimental results showed that our heuristics provide significant savings in carbon emissions compared to the baseline. 
Furthermore, for smaller problem instances, we showed that several heuristics achieve a performance close to the optimal ILP solution. 
Altogether, all these results represent a major advance in the understanding of the problem.

Future work will be devoted to the next step, namely targeting the design of a carbon-aware extension of HEFT.
Mapping and scheduling the workflow at the same time while minimizing carbon emissions may well lead to even better solutions.
Given the difficulty of the problem, we envision a two-pass approach: a first pass devoted to mapping and ordering, 
but without a finalized schedule, and a second pass devoted to optimizing the schedule through the approach followed in this paper.

\iftoggle{TR}{
\paragraph*{Acknowledgements.}
This work is partially supported by Collaborative Research Center (CRC) 1404 FONDA -- 
\emph{Foundations of Workflows for Large-Scale Scientific Data Analysis}, which is funded
by German Research Foundation (DFG).
}
{}

\bibliographystyle{abbrv} 
\bibliography{sample-base,myChapter,references}

\setcounter{section}{0} 
\renewcommand{\thesection}{\Alph{section}}
\iftoggle{LV}{
\newpage
\section{Appendix}
\label{app.app}

\subsection{Cost of a Schedule}
\label{app.cost}

In this section, we detail how to compute the cost of a schedule in polynomial time.
The schedule~$\sigma$ gives the starting time of each task: task~$u\in V_c$ starts
at time~$\sigma(u)$, and hence completes at time $\sigma(u) + \omega(u)$. 

We proceed interval by interval. 
Recall that we are given $J$ intervals
$\{I_{1}, \dots, I_{J}\}$, where interval $I_{j}$ has length~$\ell_{j}$, and $\sum_{j=1}^{J} \ell_{j} =T$.
We have
$I_j = [b_j, e_j[$ so that $\ell_j = e_j -b_j$ for every $1 \leq j \leq J$, and 
the set of starting and ending times of the $J$ intervals is 
$$\mathcal{E} = \{b_1=0, e_1=b_2, e_2=b_3, \ldots, e_{J-1}=b_J, e_J=T\}.$$

For each interval $I_{j} = [b_{j}, e_{j}[$:
\begin{itemize}
\item We compute the intersection of the execution of each task 
with~$I_{j}$. 
Since the schedule~$\sigma$ is given, we can complete this step in linear time. 
We let $\mathcal{Q}^{j}$ be the subset of tasks intersecting with~$I_{j}$, i.e., 
$$u\in \mathcal{Q}^{j} \quad \Leftrightarrow \quad [\sigma(u), \sigma(u)+ \omega(u) [ \;\;  \cap  \;\; I_j \neq \emptyset . $$
\item For each task $u \in \mathcal{Q}^{j}$, 
we let $[\mathit{start}_{u}^{j}, \mathit{end}_{u}^{j}[ \subseteq [b_{j}, e_{j}[$ denote its execution interval within $I_{j}$. 
\item We sort the list $\{ \mathit{start}_{u}^{j}, \mathit{end}_{u}^{j}\}$ for $u \in \mathcal{Q}^{j}$,
adding $b_{j}$ and~$e_{j}$ if these values are not already in the list, and removing duplicates. 
Let $q_{j}$ be the number of elements of the list. We derive the ordered list
$\mathcal{E}^{j} := \{d_1=b_{j}, d_2, \dots, d_{q_{j}}=e_{j}\}$
and record the length $\lambda_{k}$ of each subinterval $[d_{k}, d_{k+1}[$ for $1 \leq k < q_{j}$.
\item \old{The total power consumed during interval $I_{j}$ is then computed as 
$\mathcal{P}_{j} = \sum_{k=1}^{q_{j}-1} \sum_{i=1}^{P^{2}} \lambda_k \mathcal{P}^{i}(d_k) $,
with the carbon cost  $\mathcal{CC}_{j} = \max(\mathcal{P}_{j} - \budget_{j}, 0)$.
Indeed, the consumed power is constant in each subinterval $[d_{k},d_{k+1}[$, hence we multiply 
the power at the beginning of the interval (time $t=d_k$) by the interval length~$\lambda_k$.}
\item \new{Now, the consumed power is constant in each subinterval $[d_{k},d_{k+1}[$, equal to $\mathcal{P}^{i}(d_k)$ on processor~$i$,
for a green budget of $\budget_j$ in this subinterval. Hence, the carbon cost for this subinterval is the total power above the green budget, 
which is $\max\left(\sum_{i=1}^{P^2} \mathcal{P}^{i}(d_k) - \budget_j, 0\right)$, multiplied by the  
interval length~$\lambda_k$.
Finally, we can compute the total carbon cost of the interval $I_j$ as \begin{equation*}
\mathcal{CC}_j = \sum_{k=1}^{q_j-1} \lambda_k \max\left(\left(\sum_{i=1}^{P^2} \mathcal{P}^{i}(d_k)\right) - \budget_j, 0\right).
\end{equation*}  
} 
%
\end{itemize}
The total carbon cost of the schedule is $\mathcal{CC} = \sum_{j=1}^{J} \mathcal{CC}_{j}$, 
it is obtained in polynomial time using this approach.

\subsection{Proof of Lemma~\ref{th.lemmablock}}
\label{app.comp.blocks}

In this section, we first prove Lemma~\ref{th.lemmablock} and show that with a single processor
there always exists 
an optimal $\mathcal{E}$-schedule. Then, we show that the dynamic programming algorithm 
can restrict to a polynomial number of task end times.

\paragraph{Proof of Lemma~\ref{th.lemmablock}.}
Recall that  the set of starting and ending times of the $J$~intervals is: 
$$\mathcal{E} = \{b_1=0, e_1=b_2, e_2=b_3, \ldots, e_{J-1}=b_J, e_J=T\}.$$

Consider an optimal schedule $\mathcal{S}$ with a single processor. If all blocks are aligned to the beginning or end of an interval,
we already have an $\mathcal{E}$-schedule. Otherwise, there is a block of tasks 
$$v_{r} \rightarrow v_{r+1} \rightarrow \dots \rightarrow v_{s}$$
(we can have $r=s$) that is not aligned: $v_{r}$ starts at time $b_{k}+\alpha$, i.e., 
$\alpha$ units of time after the beginning of interval~$I_{k}$,  and $v_{s}$ ends at time $b_{\ell}+\beta$, 
i.e., $\beta$ units of time after the beginning of interval $I_{\ell}$, where $k \leq \ell$. 
By hypothesis, $\alpha$ and $\beta$ are both \old{non-zero} \new{nonzero}, since the block is not aligned. 

By symmetry, assume w.l.o.g. that $\mathcal{G}_k \geq \mathcal{G}_\ell$, meaning that we should try to shift the block \new{to the} left. 
If no other task is scheduled 
during the beginning of $I_{k}$, let $\gamma=0$. Otherwise, let $b_{k}+\gamma$ denote the end time 
of the last task~$v_{q}$ 
that is scheduled before~$v_{r}$. 
Note that $\gamma < \alpha$ because this last task does not belong to the block. We can shift the block left by
$\alpha-\gamma$ time units without exiting $I_{k}$. Similarly, we can shift the block left by
$\beta$ up to reaching the beginning of $I_{\ell}$. 
Hence, we shift the block left by $\delta=\min(\alpha-\gamma,\beta)$ time units. 
After this shift, the block will either be aligned with the beginning of $I_{k}$, or merged with (the block of) 
task~$v_{q}$, or aligned to the beginning of~$I_{\ell}$ (see Figure~\ref{fig.exDP}). 
Also, after this shift, the carbon cost cannot have increased, since we moved some load 
to an interval with a higher green power budget. Since the initial schedule is optimal,
the carbon cost remains the same, the new  
schedule is optimal and it has either one more block aligned, or one less block (or both) 
than the original optimal schedule~$\mathcal{S}$.
Proceeding by induction, this proves that there exists an optimal schedule where all blocks are aligned, 
namely an optimal $\mathcal{E}$-schedule.

\begin{figure}
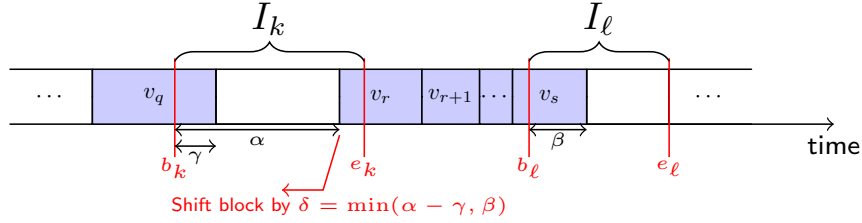

\begin{center}
\resizebox{1.0\textwidth}{!}{ \begin{tikztimingtable}[
    timing/slope=0,         
    timing/rowdist=0.5,     
    timing/coldist=2pt,     
    xscale=3,yscale=2, 	
    thin,              		
]
& 
D{$\cdots$}  {[fill=blue!20] 1.5D{$v_q$}}1.5D{$\;$}{[fill=blue!20] 1D{$v_r$}}{[fill=blue!20] 0.7D{$v_{r+1}$}}{[fill=blue!20] 0.4D{$\cdots$}}{[fill=blue!20] 0.9D{$v_{s}$}} D{$\;$} D{$\cdots$}\\
\extracode
	\makeatletter
 	\begin{pgfonlayer}{foreground}
		\interval{2}{4.3}{0}
		
		\draw [decorate,decoration={brace,amplitude=5pt,raise=1.1ex}]
  (2,.8) -- (4.3,.8) node[midway,yshift=1.6em]{{$I_k$}};

		\interval{6.3}{8}{0}
		\draw [decorate,decoration={brace,amplitude=5pt,raise=1.1ex}]
  (6.3,.8) -- (8,.8) node[midway,yshift=1.6em]{{$I_\ell$}};

			\draw [<->] 
  (2,-.1) -- (4,-.1) node[midway,yshift=-.3em]{\tiny{$\alpha$}};
			\draw [<->] 
  (2,-.4) -- (2.5,-.4) node[midway,yshift=-.3em]{\tiny{$\gamma$}};

			\draw [<->] 
  (6.3,-.1) -- (7,-.1) node[midway,yshift=-.3em]{\tiny{$\beta$}};
  
\draw[<-, color=red!100]  (3.3,-1.2) -- (3.8,-1.2) -- (4,-.2); 
\draw[color=red!100] (4,-1.5) node{\tiny{Shift block by $\delta=\min(\alpha-\gamma,\beta)$}};

%
\draw[color=red!100] (2,-.8) node{\tiny{$b_k$}};
\draw[color=red!100] (4.3,-.8) node{\tiny{$e_k$}};
\draw[color=red!100] (6.3,-.8) node{\tiny{$b_\ell$}};
\draw[color=red!100] (8,-.8) node{\tiny{$e_\ell$}};

 	\end{pgfonlayer}
 	\begin{pgfonlayer}{background}
 		\timeline{1}{10}{0};
 	\end{pgfonlayer}
\end{tikztimingtable} 
}
\end{center}
\caption{Illustration of the block shift to create an  $\mathcal{E}$-schedule. \label{fig.exDP}}
\end{figure}

\paragraph{Restriction to a polynomial number of task end times.}

Given an optimal $\mathcal{E}$-schedule with a single processor, 
each task $v_{u}$  belongs
to a block $[v_{r}, v_{s}]$, where $r \leq u \leq s$. There are $O(n^{2})$ possible such blocks. 
Either $v_{r}$ starts or $v_{s}$ ends (or both) at some time in~$\mathcal{E}$. Hence, the end time of $v_{u}$ can be deduced from the block that it belongs to and the set $\mathcal{E}$ (simply add or subtract the length of previous or following tasks in the block). This leads to $O(n^{2} J)$ possible end times for~$v_{u}$. Summing up over all $n$ tasks, we get $O(n^{3} J)$ possible end times for all the tasks. 
We define $\mathcal{E'}$ as the set of all these possible task end times.

\paragraph{Fully-polynomial dynamic programming algorithm on a single processor.}
Given a schedule on a single processor, we transform it into an $\mathcal{E}$-schedule as explained above,
without increasing its carbon cost. We obtain the set $\mathcal{E'}$ of all possible task end times, which is of cardinality at most $O(n^{3} J)$. 
We can safely restrict the values of $t$ in the pseudo-polynomial dynamic programming algorithm described in Section~\ref{sec.comp.seq} to these  end times, and still derive an optimal schedule. 
This concludes the proof for the design of the fully-polynomial dynamic programming algorithm.

\subsection{Proof of Theorem~\ref{th.severalproc}}
\label{app.comp.para}

In this section, we prove Theorem~\ref{th.severalproc}, namely that  \texttt{UCAS} is strongly NP-complete.
Recall that   \texttt{UCAS} denotes the class of decision problem instances with $P$ processors 
with uniform power consumption, i.e., $\idleP{i} = 0$, $\workP{i} = 1$ for  $1 \leq i \leq P$, and 
independent tasks (no dependence nor communication). Given a bound $C$, we ask whether there exists a valid schedule whose total carbon cost does not exceed~$C$. 

First,  \texttt{UCAS} clearly belongs to NP, a certificate being the schedule itself with the start and end times of the tasks. For the strong completeness, we make a  reduction from the \texttt{3-Partition} problem, which is
strongly NP-complete~\cite{GareyJohnson}.
Consider an instance $\calI_{3P}$ of \texttt{3-Partition} as follows. Let $S=\{x_1,\dots,x_{3n}\}$ be the multiset of $3n$ positive integers, and let $B$ be a bound such that 
    \begin{equation*}
        B = \frac{\sum_{i=1}^{3n}x_i}{n} \quad \text{and} \quad \frac{B}{4} < x_i < \frac{B}{2} \text{~for~} 1 \leq i \leq 3n .
    \end{equation*}
Is there a partition of $S$ into triplets $S=S_1\cup \dots \cup S_{n}$ such that the sum over the 
    elements of each triplet $S_i$ is equal to $B$? We construct the following instance $\calI_{\ucas}$ of
    \ucas:
    \begin{itemize}
    \item We have  $3n$ power-homogeneous processors $p_{1}, \dots, p_{3n}$ with $\idleP{i}=0$ and 
    $\workP{i}=1$ for $1 \leq i \leq 3n$.
    \item We have a workflow of $3n$ independent tasks $v_{1}, \dots, v_{3n}$ with $\omega(v_{i}) = x_{i}$
    for $1 \leq i \leq 3n$. Task $v_{i}$ executes on processor $p_{i}$ for $1 \leq i \leq 3n$.
    \item The horizon is a set of  $J=2n-1$ intervals of total length $T=nB+n-1$. Odd-numbered intervals $I_{1}, I_{3}, \dots, I_{2n-1}$
    each have length $B$ and green power budget $1$, while even-numbered intervals $I_{2}, I_{4}, \dots, I_{2n-2}$ have length $1$ and green budget $0$.
    \item The bound on total carbon cost is $C=0$.
       \end{itemize}
    Clearly, the size of $\calI_{\ucas}$ is linear in the size of $\calI_{3P}$. We show that $\calI_{\ucas}$
    has a solution if and only if $\calI_{3P}$ does.
    
    First, if $\calI_{3P}$ has a solution $S=S_1\cup \dots \cup S_{n}$, we execute each triplet $S_{k}$ in sequence during interval $I_{2k-1}$. The total duration is $B$, and at any time unit, exactly one processor is active. The total cost is indeed $0$, and  $\calI_{\ucas}$ has a solution.
    
    Now, if $\calI_{\ucas}$ has a solution, what is the corresponding schedule? Because the total cost is $0$:
    (i) no task can be executed during even-numbered intervals; and (ii) at most one task can be executed during any time unit of odd-numbered intervals. Because the total length of these intervals is $nB$ (the sum of all task durations), exactly one task is executed at each time unit of odd-numbered intervals. Hence, we have a partition
    of the $3n$ tasks into $n$ subsets of total duration $B$ that are executed sequentially and entirely during these intervals. Because $\frac{B}{4} < x_i < \frac{B}{2}$ for $1 \leq i \leq 3n$, the $n$ subsets are in fact triplets, and we have a solution to $\calI_{3P}$. This concludes the proof.
    
\subsection{Details on the Integer Linear Program}
\label{app:ILP-details}

In this section, we give a detailed formulation for the integer linear program sketched in Section~\ref{sec:ILP}. 
First, we introduce all necessary variables for the model. We need variables for the amount of green (renewable) power
and brown (carbon-emitting) power we use during every time step. Hence, we introduce integer variables 
\begin{equation}
  gu_{t} \geq 0 \quad \text{and} \quad bu_{t} \geq 0,  \quad 0 \leq t < T,
\end{equation} 
for the green power usage and the brown power usage, respectively. Let $G_c = (V_c, E_c,\omega)$ be the communication-enhanced \old{graph} \new{DAG} as defined in Section~\ref{sec.framework}. 
For every (task, \old{mapping} \new{processor}) pair $(v, p_v)$, where  $p_v$ is the processor where 
$v$ is located, we define variables 
\begin{equation}
  s(v,p_v)_{t} \in \{0,1\}, \ e(v,p_v)_{t} \in \{0,1\}, \ r(v,p_v)_{t} \in \{0,1\}, \ 0 \leq t < T, 
\end{equation} 
where $s(v,p_v)_{t} = 1$ if and only if task $v$ is started on processor $p_v$ at time unit $t$, 
$e(v,p_v) = 1$ if and only if task $v$ ends
on processor $p_v$ at time unit $t$, and $r(v,p_v)_t = 1$ if and only if task $v$ is running 
on processor $p_v$ at time unit $t$. For ease of notation, 
we may omit the processor since the mapping is given in advance. Further, we define for 
every time unit $t$ an integer variable $\gamma_{t} \geq 0$, which represents the overall power 
that the cluster uses at time unit~$t$.
In order to compute the brown power usage, we also introduce auxiliary variables $\alpha_t$ for every time unit $t$, where $\alpha_t=1$ if and only if we
need brown power at time unit $t$. With these variables, we can now formulate the integer linear program. 
First, we set the objective function, which writes: 
\begin{equation*}
  \min \sum_{t=0}^{T-1} bu_t .
\end{equation*} 

We now set the constraints for the ILP. First, we need to ensure that every task is executed exactly once, such that it finishes before the 
deadline. This can be modeled by
\begin{align}
  \sum_{t = 0}^{T-\omega(v)} s(v,p_v)_t = 1 \quad \forall v \in V_c, \\
  \sum_{t=T-\omega(v)+1}^{T-1} s(v,p_v)_t = 0 \quad \forall v \in V_c.\label{eq1}
\end{align} Further, we have to ensure the same for the end of the tasks. This is given by \begin{align}
  \sum_{t=0}^{\omega(v)-2} e(v,p_v)_t = 0 \quad \forall v \in V_c, \label{eq2}\\
  \sum_{t=\omega(v)-1}^{T-1} e(v,p_v)_t = 1 \quad \forall v \in V_c.
\end{align} 
Note that if $\omega(v) < 2$,  sums in Eqs.~\eqref{eq1} and \eqref{eq2} are just empty. 

Next, we have to make sure that the beginning of a task and the end of a task are 
aligned according to the execution time of the task. Hence, we get for every task $v \in V_c$ 
the constraints: 
\begin{equation}
  s(v,p_v)_t = e(v,p_v)_{t+\omega(v)-1}, \quad 0 \leq t \leq T-\omega(v) .
\end{equation} 

Lastly, we have to make sure that the variables $r(v,p_v)_t$ are aligned with the beginning and end of the task. Hence, we get for every
task $v$ the constraints: 
\begin{align}
  \sum_{t=0}^{T-1} r(v,p_v)_t = \omega(v), \\
  r(v,p_v)_k \geq s(v, p_v)_t, t \leq k \leq t+\omega(v)-1, \ 0 \leq t < T\new{-\omega(v)}.
\end{align} 

Altogether, we have ensured that every task starts in time. Hence, we now have to make sure that all precedence and communication constraints 
are respected. Due to the structure of the communication-enhanced \old{graph} \new{DAG} $G_c$, 
it is enough to respect every edge as a simple order constraint. Using this graph,
we introduce for every edge $(u, v)\in E_c$ the constraint
\begin{align}
  s(v,p_v)_t \leq \sum_{l=0}^{t-1} e(u, p_u)_l, \quad 0 \leq t < T.
\end{align} 
Note that by construction of $G_c$ these constraints ensure both the internal order of a processor and the inter-processor communication constraints.

Finally, we have to introduce constraints for the power usage. As seen in Section~\ref{sec.framework},
we have to ensure
\begin{align}
  gu_t = \min(\budget_t, \gamma_t), \label{eq:gu} \\
  bu_t = \max(0, \gamma_t - \budget_t),\label{eq:bu}
\end{align} \new{where $\gamma_t$ is the overall power usage at time unit $t$.}
To model Eqs.~(\ref{eq:gu}) and~(\ref{eq:bu}), we use the Big-M method. 
For this, we need auxiliary constants $\epsilon, M \in \mathbb{R}_{>0}$, where
$\epsilon$ is sufficiently small and $M$ is sufficiently large. For $\epsilon$, we can choose any value 
$0 < \epsilon <1$. Further, it suffices to estimate 
$M$ by 
\begin{equation*}
  M \geq \sum_{i=1}^{P^2} \left( \idleP{i} + \workP{i}\right), 
\end{equation*} 
since we cannot use more brown power than that.

First, we ensure Eq.~(\ref{eq:bu}) by the following constraints:
\begin{align}
  bu_t \geq 0, \quad 0 \leq t < T\\
  bu_t \geq \gamma_t - \budget_t, \quad 0 \leq t < T\\
  bu_t \leq \gamma_t - \budget_t + M(1-\alpha_t)\quad 0 \leq t < T\\
  bu_t \leq M\cdot \alpha_t , \quad 0 \leq t < T\\
  \gamma_t - \budget_t \leq M\cdot \alpha_t, \quad 0 \leq t < T\\
  \gamma_t - \budget_t \geq \epsilon - M(1-\alpha_t)\quad 0 \leq t < T.
\end{align} 
\new{Here, $\alpha_t$ is a binary variable indicating whether we need brown power or not.} To determine the green power usage, we \new{then} only need the constraints 
\begin{align}
  gu_t \geq 0, \quad 0 \leq t < T\\
  gu_t + bu_t = \gamma_t ,\quad 0 \leq t < T.
\end{align}
Finally, we must determine the overall power usage by the constraint:
 \begin{equation}
  \gamma_t = \sum_{i=1}^{P^2} \idleP{i} + \sum_{v\in V_c} r(v,p_v)_t \new{\cdot \workP{p_v}} , \quad 0 \leq t < T.
\end{equation}

\subsection{Further Experimental Results}
\label{sec:moreResults}
\begin{figure}[H]
  \includegraphics[width=\linewidth]{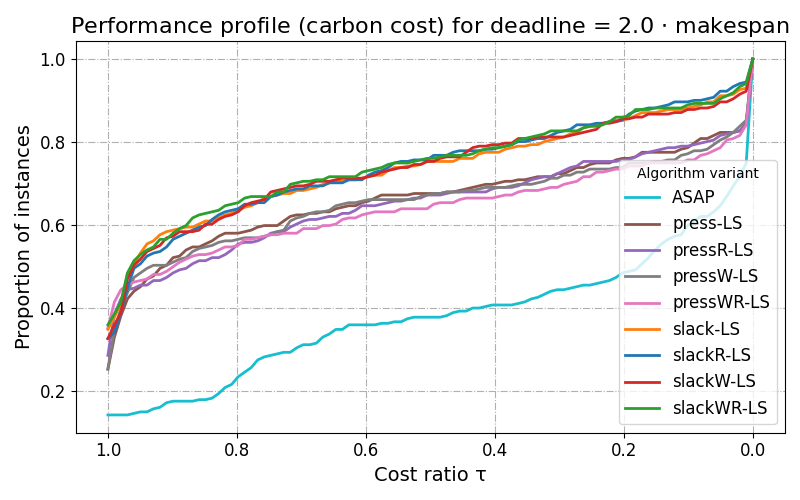}    
  \caption{Performance profile for tolerance factor $2$ for the deadline (extends Figure~\ref{fig.pf_offset}).}
  \label{fig:pf_offset_2}
\end{figure}

\begin{figure}[H]
  \includegraphics[width=\linewidth]{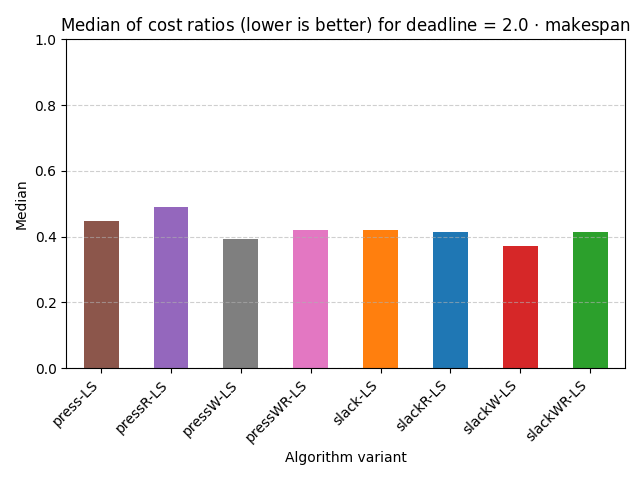}    
  \caption{Cost ratios for tolerance factor $2$ for the deadline (extends Figure~\ref{fig.means_offset}).}
  \label{fig:pf_means_2}
\end{figure}

\begin{figure}[H]
  \centering
  \includegraphics[width=\linewidth]{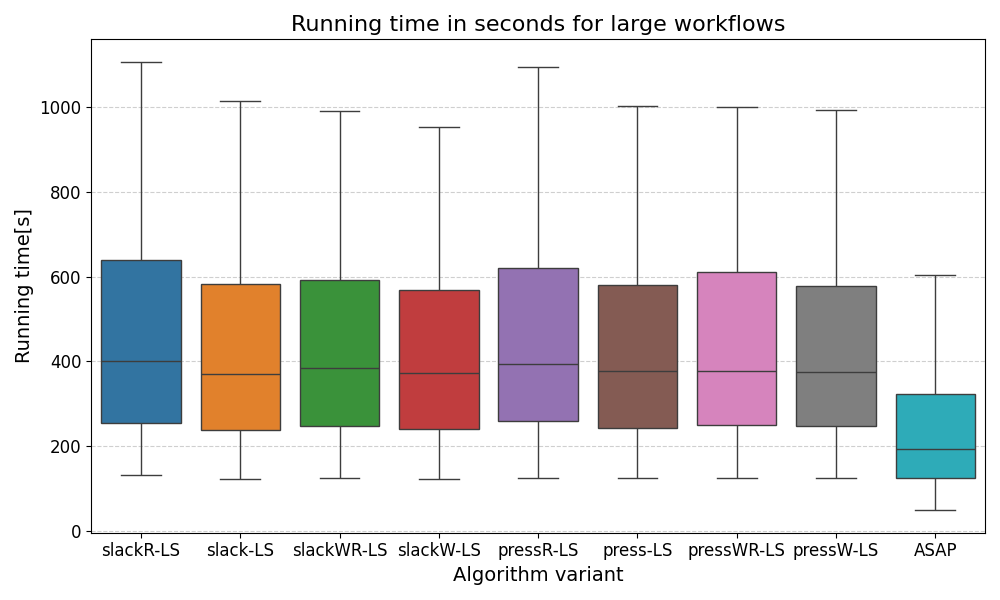}
  \caption{Time (in seconds) for each algorithm variant only for large workflows. Large workflows have between \numprint{20000} and \numprint{30000} tasks.}
  \label{fig.time_comparison_largeWF}
\end{figure}

\begin{figure*}
  \centering
  \begin{subfigure}[b]{0.6\textwidth}
    \includegraphics[width=\linewidth]{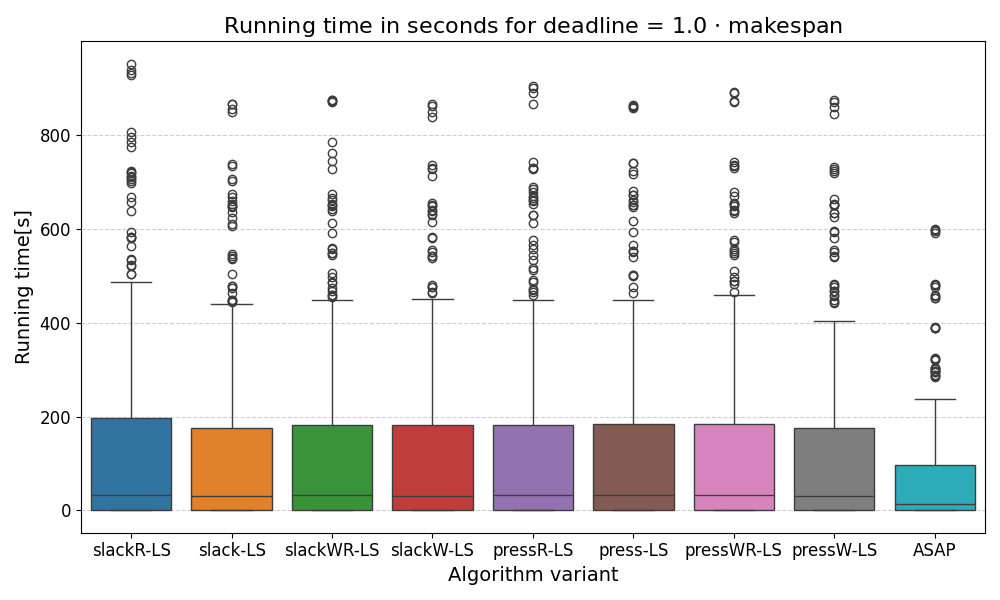}    
  \end{subfigure}
  \hfill 
  \begin{subfigure}[b]{0.6\textwidth}
    \includegraphics[width=\linewidth]{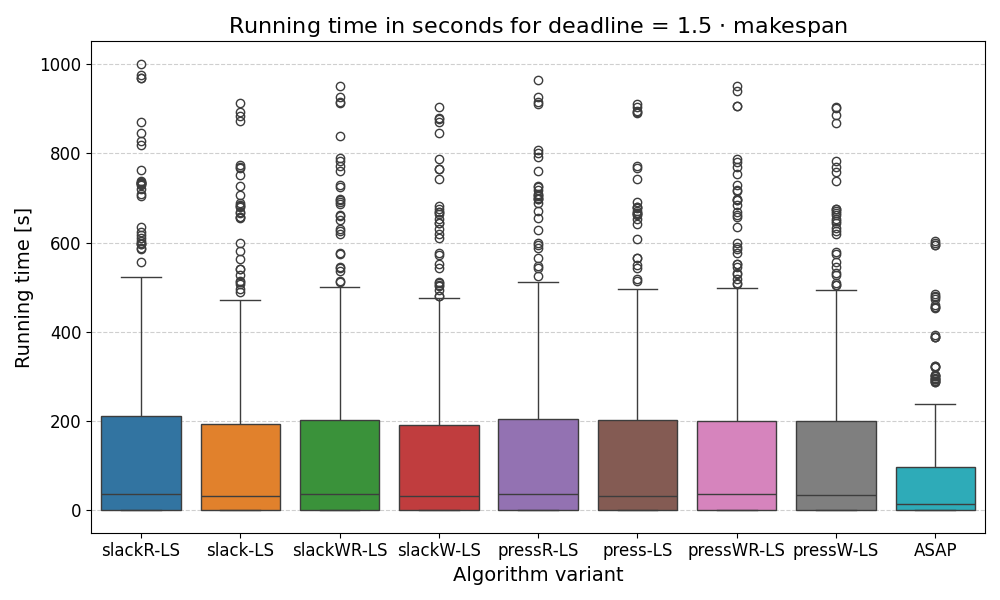}    
  \end{subfigure}
  \hfill 
   \begin{subfigure}[b]{0.6\textwidth}
    \includegraphics[width=\linewidth]{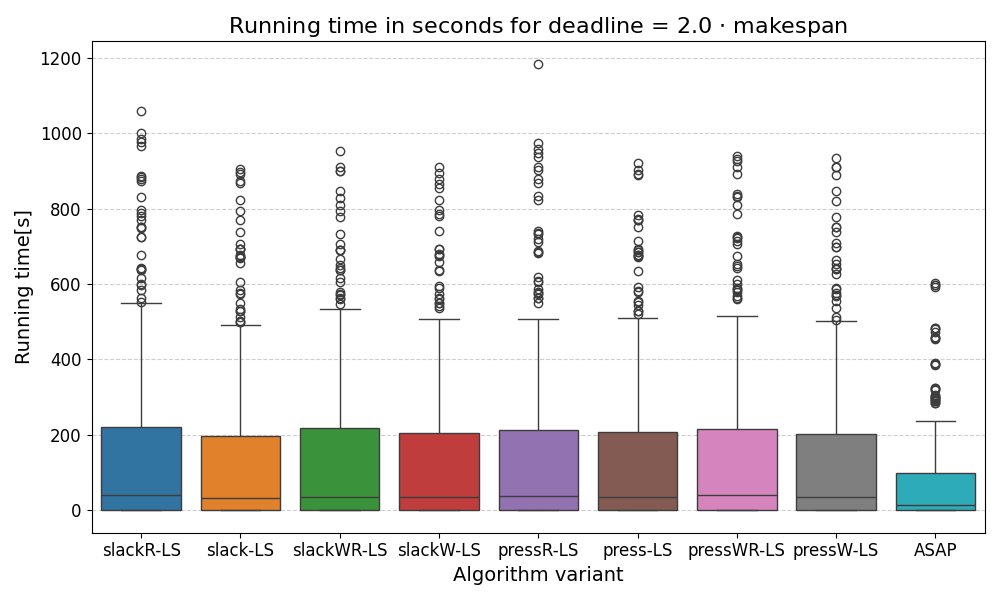}    
  \end{subfigure}
  \hfill 
  \begin{subfigure}[b]{0.6\textwidth}
    \includegraphics[width=\linewidth]{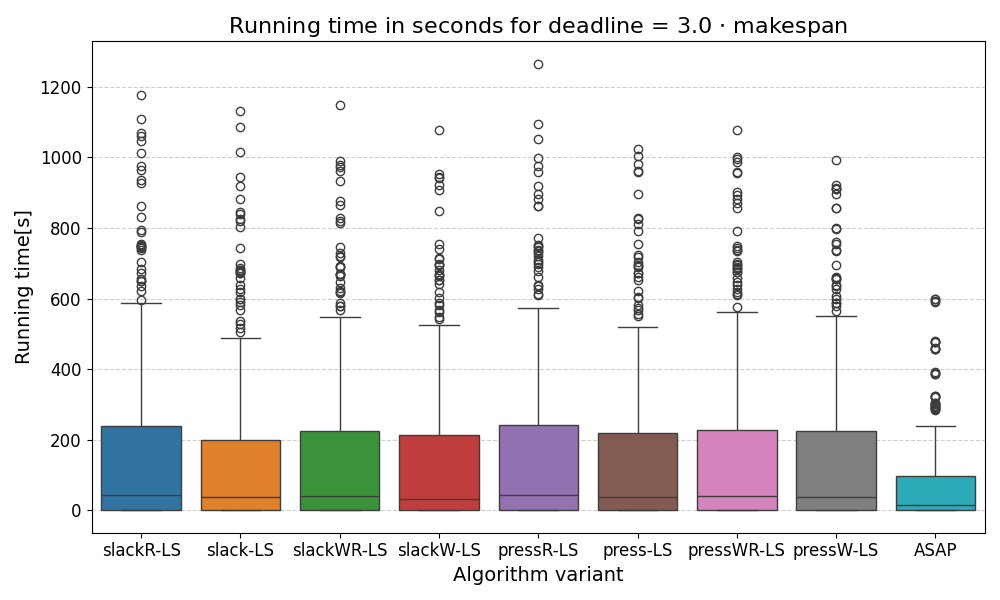}    
  \end{subfigure}
  \caption{The evolution of the running time when adding more tolerance to the deadline from top to bottom.}    
  \label{fig.rt_offset}
\end{figure*}

\begin{figure*}
  \centering
  \begin{subfigure}[b]{1.0\textwidth}
    \includegraphics[width=\linewidth]{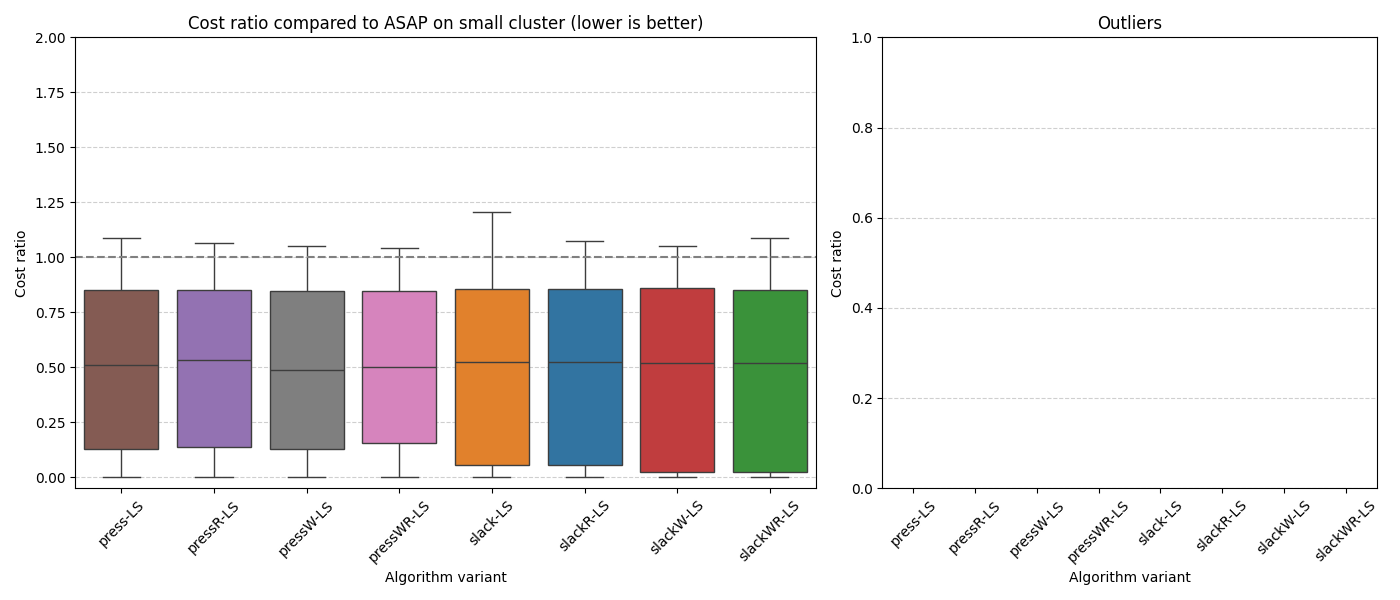}    
  \end{subfigure}
  \hfill 
  \begin{subfigure}[b]{1.0\textwidth}
    \includegraphics[width=\linewidth]{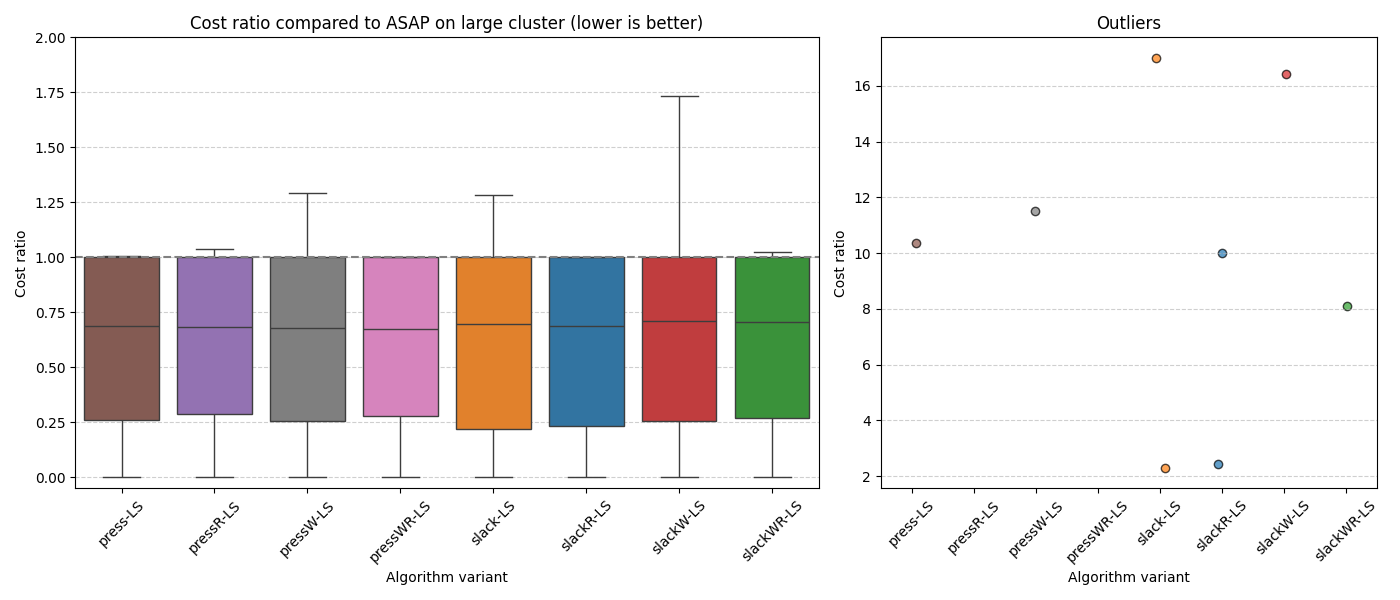}    
  \end{subfigure}
  \hfill 
  \caption{Cost ratio for different cluster sizes.}    
  \label{fig.costRatio_cluster}
\end{figure*}

\begin{figure*}
  \centering
  \begin{subfigure}[b]{0.9\textwidth}
    \includegraphics[width=\linewidth]{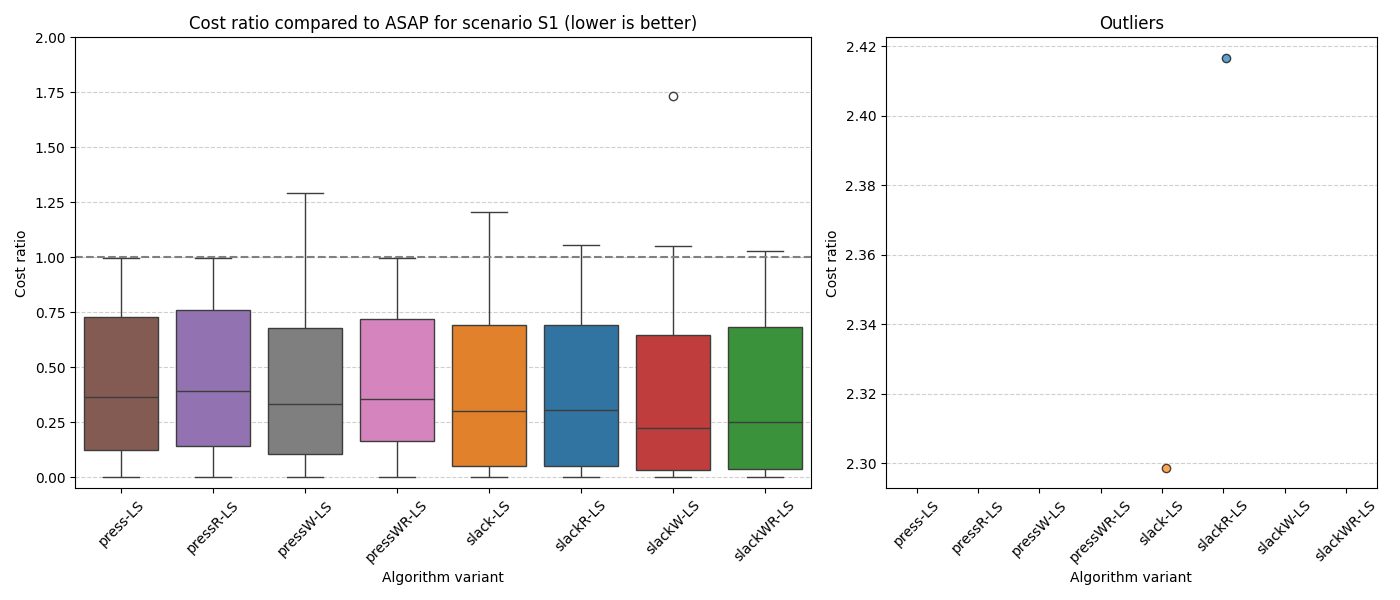}    
  \end{subfigure}
  \hfill 
  \begin{subfigure}[b]{0.9\textwidth}
    \includegraphics[width=\linewidth]{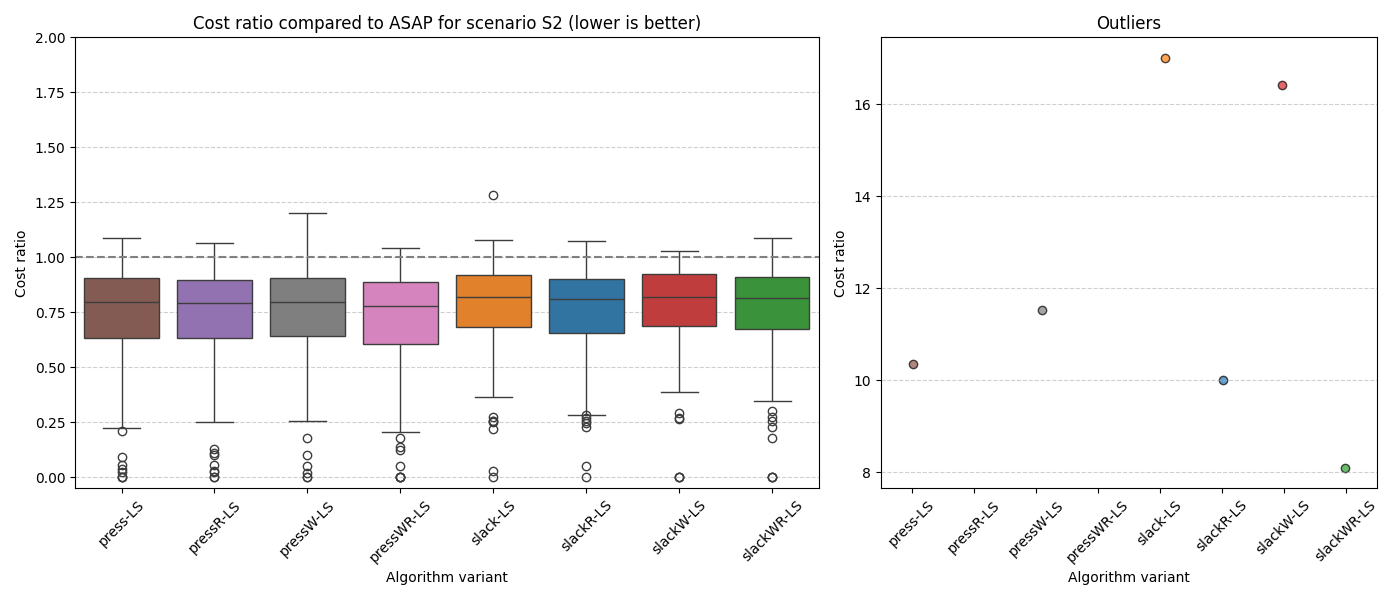}    
  \end{subfigure}
  \hfill 
   \begin{subfigure}[b]{0.9\textwidth}
    \includegraphics[width=\linewidth]{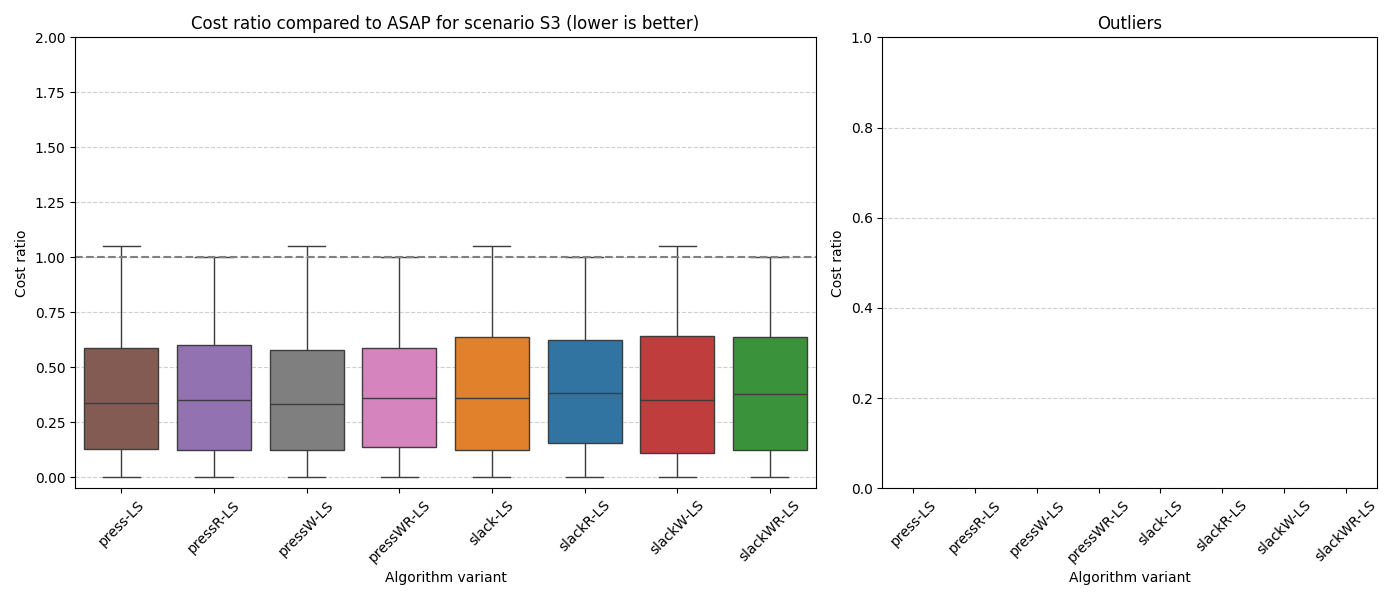}    
  \end{subfigure}
  \hfill 
  \begin{subfigure}[b]{0.9\textwidth}
    \includegraphics[width=\linewidth]{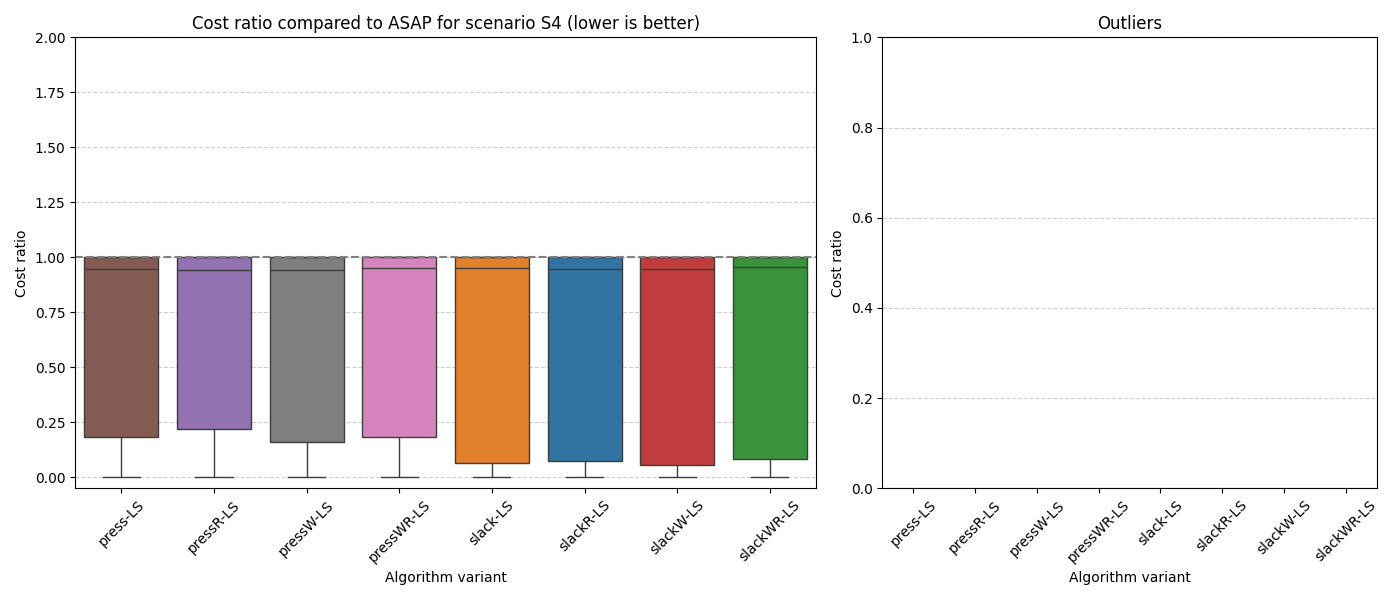}    
  \end{subfigure}
  \caption{Cost ratio for different power profiles.}    
  \label{fig.costRatio_profile}
\end{figure*}

\begin{figure*}
  \centering
  \begin{subfigure}[b]{0.85\textwidth}
    \includegraphics[width=\linewidth]{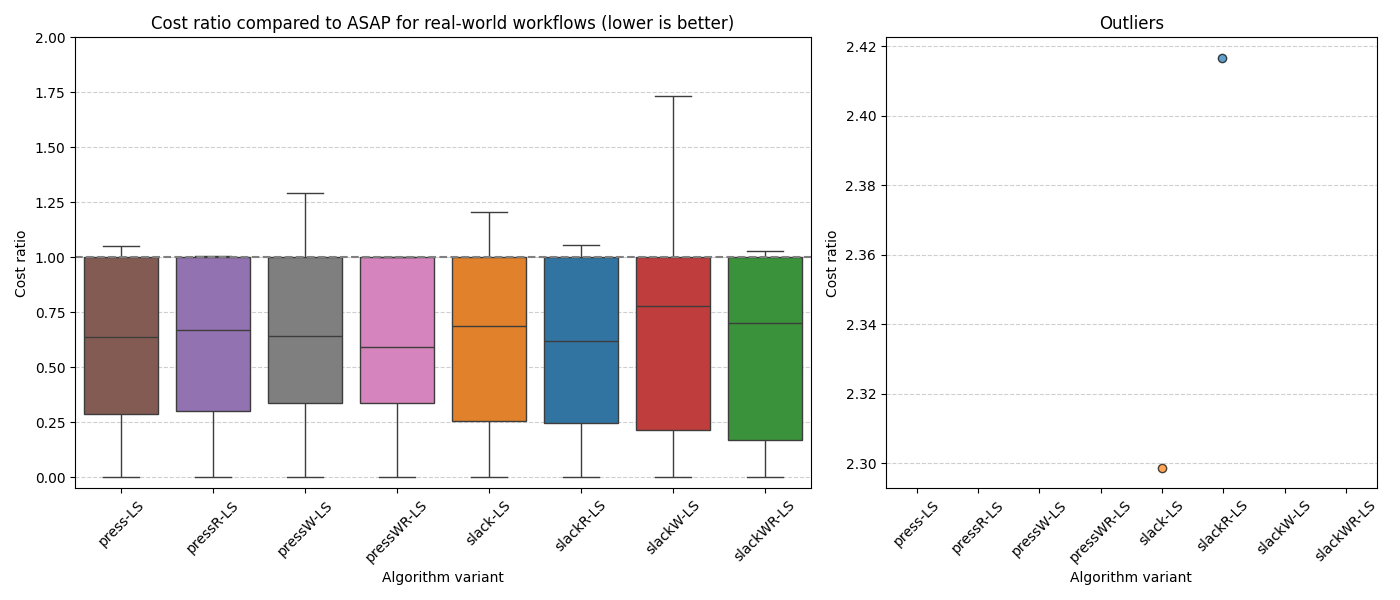}    
  \end{subfigure}
  \hfill 
  \begin{subfigure}[b]{0.85\textwidth}
    \includegraphics[width=\linewidth]{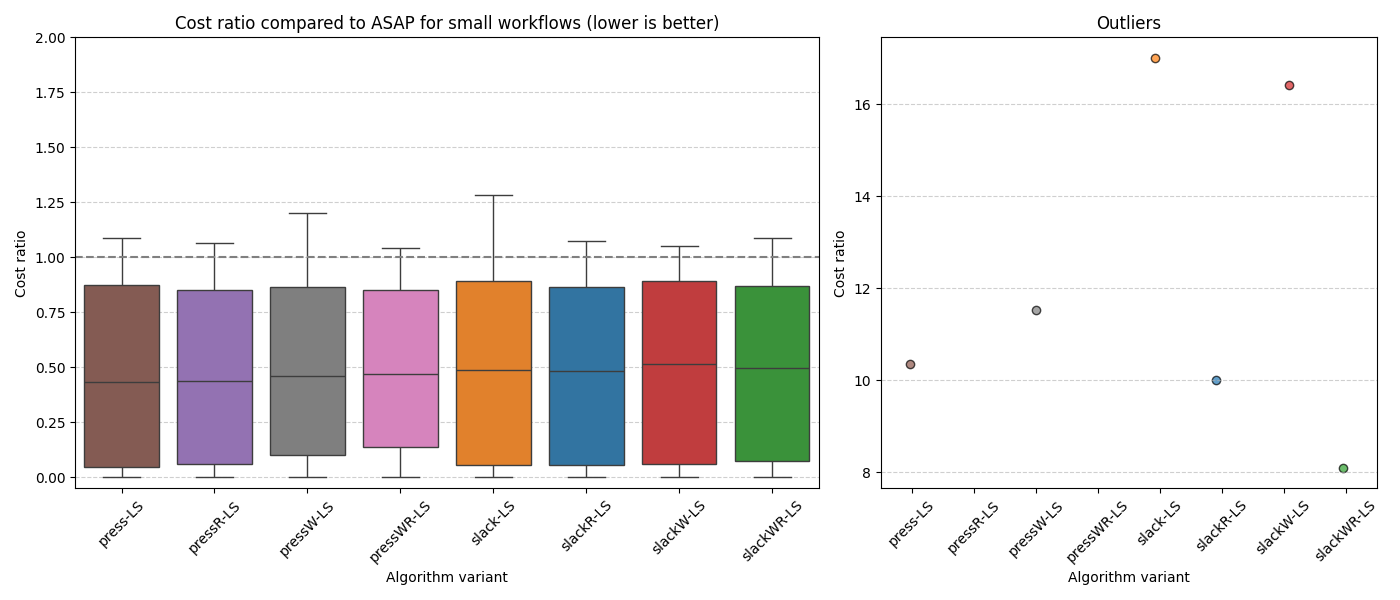}    
  \end{subfigure}
  \hfill 
   \begin{subfigure}[b]{0.85\textwidth}
    \includegraphics[width=\linewidth]{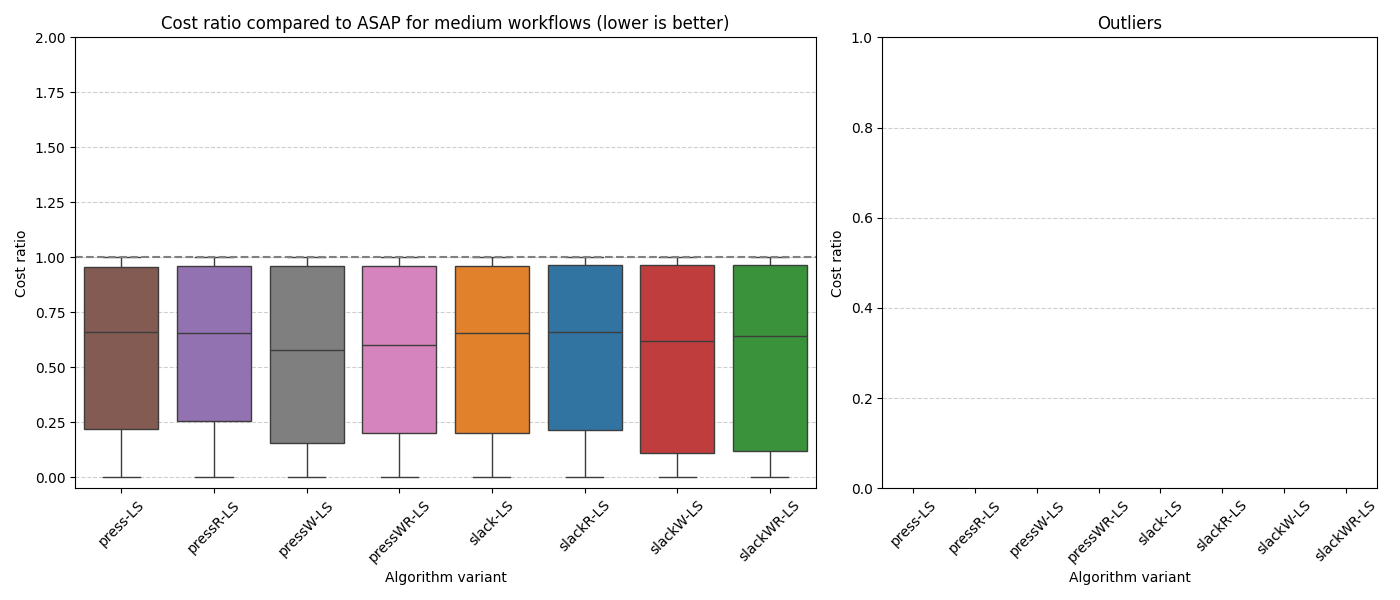}    
  \end{subfigure}
  \hfill 
  \begin{subfigure}[b]{0.85\textwidth}
    \includegraphics[width=\linewidth]{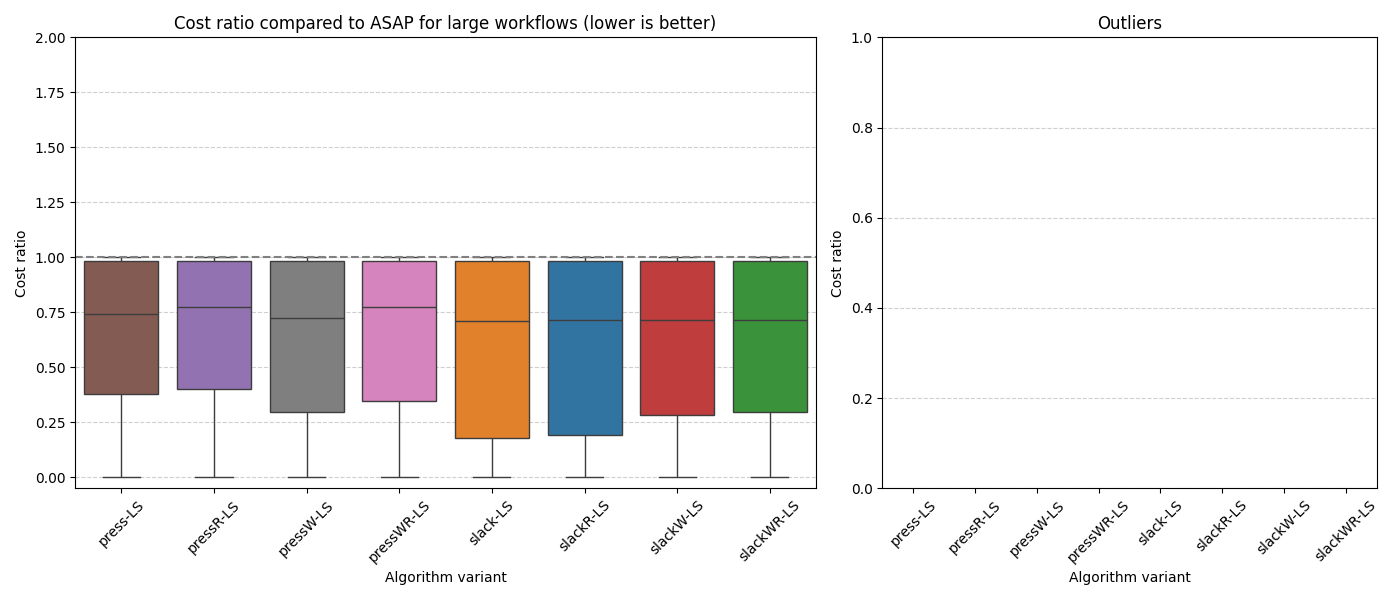}    
  \end{subfigure}
  \caption{Cost ratio for different sized workflows. Small workflows have between \numprint{200} and \numprint{4000} tasks, 
  medium workflows have between \numprint{8000} and \numprint{18000} tasks, and large workflows have between \numprint{20000} and \numprint{30000} tasks.}    
  \label{fig.costRatio_workflows}
\end{figure*}

\begin{figure*}
  \centering
  \begin{subfigure}[b]{1.0\textwidth}
    \includegraphics[width=\linewidth]{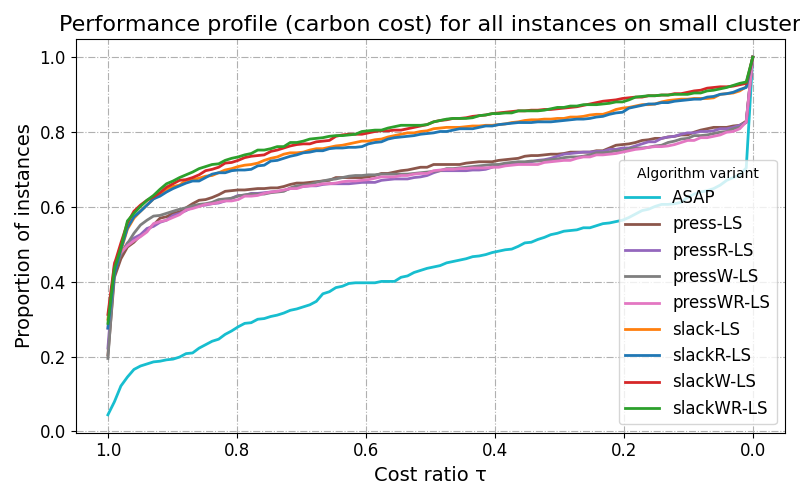}    
  \end{subfigure}
  \hfill 
  \begin{subfigure}[b]{1.0\textwidth}
    \includegraphics[width=\linewidth]{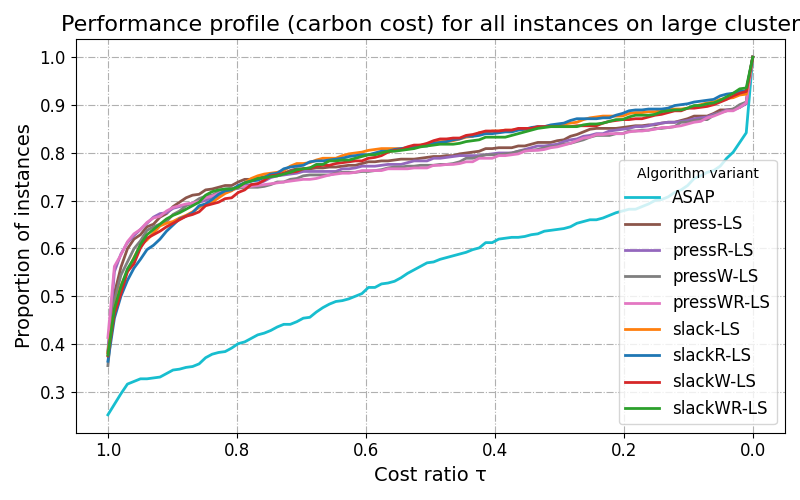}    
  \end{subfigure}
  \hfill 
  \caption{Performance profile for different cluster sizes.}    
  \label{fig.pf_cluster}
\end{figure*}

}{}

\end{document}